\def\ps@pprintTitle{%
 \let\@oddhead\@empty
 \let\@evenhead\@empty
 \let\@evenfoot\@oddfoot}
\newtheorem{remark}{Remark}
\newtheorem{theorem}{Theorem}
\newtheorem{corollary}{Corollary}
\newtheorem{definition}{Definition}
\newtheorem{prop}[theorem]{Proposition}
\Crefname{equation}{Eq.}{Eqs.}
\Crefname{prop}{Proposition}
{Propositions}
\Crefname{figure}{Figure}{Figures}
\begin{document}

\begin{frontmatter}



\title{A self-supervised learning approach for denoising autoregressive models with additive noise: finite and infinite variance cases
}


\author[inst1]{Sayantan Banerjee\corref{cor1}}
\ead{sayantanbanerjee74@gmail.com}
\author[inst2]{Agnieszka Wy{\l}oma{\'n}ska}
\ead{agnieszka.wylomanska@pwr.edu.pl}
\author[inst1]{S.Sundar}
\ead{slnt@iitm.ac.in}
\cortext[cor1]{Corresponding author}
\affiliation[inst1]{organization={Department of Mathematics},
            addressline={Indian Institute of Technology Madras}, 
            city={Chennai},
            postcode={600036}, 
            state={Tamil Nadu},
            country={India}}


\affiliation[inst2]{organization={Faculty of Pure and Applied Mathematics, Hugo Steinhaus Center},
            addressline={Wroclaw University of Science and Technology, Wyspianskiego 27}, 
            postcode={50-370}, 
            state={Wroclaw},
            country={Poland}}

\begin{abstract}
 The autoregressive time series model is a popular second-order stationary process, modeling a wide range of real phenomena.  However, in applications, autoregressive signals are often corrupted by additive noise. Further, the autoregressive process and the corruptive noise may be highly impulsive, stemming from an infinite-variance distribution. The model estimation techniques that account for additional noise tend to show reduced efficacy when there is very strong noise present in the data, especially when the noise is heavy-tailed. In this paper, we propose a novel self-supervised learning method to denoise the additive noise-corrupted autoregressive model. Our approach is motivated by recent work in computer vision and does not require full knowledge of the noise distribution. We use the proposed method to recover exemplary finite- and infinite-variance autoregressive signals, namely, Gaussian and alpha-stable distributed signals, respectively, from their noise-corrupted versions. The simulation study conducted on both synthetic and semi-synthetic data demonstrates strong denoising performance of our method compared to several baseline methods, particularly when the corruption is significant and impulsive in nature. Finally,  we apply the presented methodology to forecast the pure autoregressive signal from the noise-corrupted data.
 
\end{abstract}

\begin{keyword}
Autoregressive model \sep Additive noise \sep Symmetric $\alpha$-stable distribution \sep Self-supervised learning \sep Denoising \sep Monte Carlo simulations

\end{keyword}

\end{frontmatter}

\section{Introduction}\label{sec:Introduction}
The autoregressive (AR) time series model is widely used to model data arising from different areas such as signal processing \cite {kay1988modern, Legrand2018, Ganapathy2017}, hydrology \cite{Lohani2012}, river flow forecasting \cite{Terzi2013}, and finance \cite{Marcellino2006, Weron2008}. The classical notions dictate the model to be noise-free and governed by probability distributions with finite variance. However, these assumptions may result in the model being inadequate to capture data in the presence of large fluctuations. The $\alpha$-stable time series model is a meaningful choice when the data are impulsive and non-Gaussian in nature (see \cite{Kabasinskas2009641, Nolan2003}). The usage of $\alpha$-stable distribution in such cases stems from the generalized Central Limit Theorem (GCLT), which states that $\alpha$-stable distribution is the only possible limit distribution for normalized sums of independent and identically distributed (i.i.d.) random variables with infinite variance. This characterization enables $\alpha$-stable distribution to adequately capture data with extreme values and high variability. Thus, $\alpha$-stable distribution has found application in various fields, more significantly in signal processing \cite{Nikias1994}, finance \cite{McCulloch1996} and condition monitoring \cite{zak2017}. A recent publication \cite{Yuan2025} has used the impulsive nature of $\alpha$-stable distribution when training a neural network to enhance its robustness.\\
In this paper, we consider the AR time series model with additive noise, where both the AR model and the noise may have infinite variance. These considerations are important because real-world signals are often impulsive and corrupted by additional noise. When additive noise is present in the model, classical approaches in model estimation and identification are not sufficient. Novel methods have been introduced to these ends, accommodating additional noise \cite{Diversi2005, Diversi2007, Esfandiari2020, Zulawinski2023, Zulawinski2023a}. When additive noise is considered to have infinite variance, due to GCLT, the $\alpha$-stable distribution is taken as the exemplary distribution in \cite{Zulawinski2023b, Zulawinski2024} while estimating the noise-corrupted model. However, robust estimation techniques, while taking additive noise into account, may yield large errors for very strong noise. Furthermore, the problem of forecasting pure data from noise-corrupted samples may require the development of additional techniques. Therefore, recovering the pure autoregressive model by denoising the noise-corrupted signal is crucial for estimating and forecasting the model, particularly in the presence of highly impulsive noise.\\
The problem of denoising a real-world signal has been addressed through several approaches so far, including empirical mode decomposition and dictionary learning \cite{Li2017, Wu2018}. Several filtering-based methods have also been proposed \cite{Sameni2007, Gao2010, Chen2008}. More recently, robust methods have been developed using Kalman filters for heavy-tailed distributions \cite{Hao2023}. Wavelet-based methods \cite{Bayer2019} have also found broader applications in the context of time series denoising. These include approaches such as \cite{Gharesi2020}, with applications to condition monitoring and finance \cite{Alrumaih2002}. Furthermore, a combination of wavelet denoising and supervised learning through a convolutional autoencoder has been proposed \cite{Frusque2024} to achieve better generalization across different noise levels.
\\
Denoising methods, through both supervised and self-supervised learning, have been frequently addressed in works related to computer vision. The removal of noise in an image \cite{Fan2019, Vasilyeva2025} is an important issue in this context. Denoising networks have been successfully applied to various image denoising problems. Supervised networks train by mapping a noisy image to a clean image. In general, architectures based on convolutional neural network (CNN) \cite{Zhang2017} or multilayer perceptron (MLP) \cite{Mansour2022, Tu_2022_CVPR} have achieved state-of-the-art performance through supervised learning. The Noise2Noise method, introduced by \cite{lehtinen2018}, enables learning to denoise without requiring ground-truth images. Under the assumption that the noise has zero mean, Noise2Noise works by mapping one noisy version of the underlying pure image to another noisy version of the same image, bypassing the need for pure images at the training stage. Noise2Noise networks have shown strong denoising performance. Although it is hard to procure two noisy copies of the same image in practical scenarios, Noise2Noise has inspired a plethora of self-supervised methods, which require only a single version of the noisy image to train. However, self-supervised methods such as Noisy-As-Clean (NAC) \cite {Xu2020} or Noisier2Noise (NR2N) \cite{Moran2020} require full knowledge of the noise distribution. While under the assumption of zero-mean noise, \cite{Diversi2007} provides complete knowledge of the noise distribution for a noise-corrupted AR model, where both the AR model and the noise are Gaussian-distributed, assuming prior knowledge about the additive noise is not realistic when either of them may be impulsive.\\
In this paper, motivated by the Noise2Noise method, we propose a novel self-supervised learning method, Stable-Noise2Noise (Stable-N2N) to recover the pure symmetric $\alpha$-stable AR ($S\alpha S$-AR) time series data, where $\alpha>1$, from its additive noise-corrupted version. Our method only takes into account the model assumptions made in \cite{Zulawinski2024}, which are intricate to the noise-corrupted model estimation. We do not require explicit knowledge of the noise model.\\
We propose to judge the denoising quality of our method in the following way. Firstly, we note that the classical low-order Yule-Walker (YW) method for the estimation of an AR model yields a biased estimation in the presence of additive noise \cite{Kay1980, Esfandiari2020}. Moreover, when the AR model is governed by $\alpha$-stable distribution, the modified low-order YW method, based on the notion of fractional lower-order covariance (FLOC) \cite{Zulawinski2022}, also leads to a biased solution when additional noise is present \cite{Zulawinski2024}. We apply the low-order YW method on the denoised data and measure the denoising quality through the bias associated with the model estimation. If the noise in the model is successfully removed, the low-order YW method should yield an unbiased estimation of the model parameters, when applied on denoised data. The mean absolute error (MAE) is computed between the true values of the parameters and the estimated parameters as a measure of bias in the solution of the low-order YW method. Finally, as an application to the proposed methodology, we construct a 5-steps ahead forecast of the pure $S\alpha S$-AR time series from its noise-corrupted version combining denoised data points and estimation of a noise-corrupted model with errors-in-variables (EIV) method \cite{Diversi2007, Zulawinski2024}. We show that denoising is crucial in the context of forecasting the pure AR time series model from the noise-corrupted data. Through extensive Monte Carlo simulations, we demonstrate the efficacy and robustness of the proposed approach compared to several baselines. As a particular case of the considered noise-corrupted  $S\alpha S$-AR time series model, we apply the proposed approach on a Gaussian AR model with Gaussian additive noise to accommodate the finite variance case. Although, as mentioned before, the Gaussian-distributed model provides complete knowledge of additive noise, fulfilling the requirements of the considered baseline methods \cite{Xu2020, Moran2020}, the proposed approach still outperforms them. Therefore, in the context of denoising noise-corrupted AR time series models, we believe that the proposed methodology will serve as an important technique. As our denoising network, we choose the feedforward neural network (FNN) consisting of fully connected layers because they have shown great performance in extracting patterns out of AR time series models \cite{Zhang2003, Khashei2011}, or more specifically $\alpha$-stable time series models consisting of strongly impulsive signals \cite{Sathe2023}.\\
The main contributions of this paper are as follows.
\begin{itemize}
\item Introduction of Stable-N2N : A novel self-supervised technique for noise-corrupted $S\alpha S$-AR time series denoising, Stable-N2N relies solely on the given noise-corrupted data while training, to remove noise. The proposed method does not require explicit knowledge of the corruption in the model, leading to robust denoising performance across various noise levels for both finite- and infinite-variance models.
\item Achieving superior denoising performance over considered baselines : Comparative analysis against the existing self-supervised denoising methods over synthetic and semi-synthetic datasets demonstrates the superiority of the proposed methodology.
\item Prediction of pure $S\alpha S$-AR time series data : The proposed method is applied in combination with the estimation of the noise-corrupted model to forecast the data points of the pure time series from the noise-corrupted trajectory.
\end{itemize}

The remainder of the paper is organized as follows. In \Cref{sec:AR with additive noise}, we introduce the noise-corrupted autoregressive model and the notions of alternative dependency measures for random variables with infinite variance. In \Cref{sec:method}, we describe the proposed methodology and the motivation behind it. \Cref{sec:simulation} contains a demonstration of the effectiveness of our method through Monte Carlo simulations. We remove Gaussian and $\alpha$-stable noise across a range of noise levels by applying the proposed method to synthetic and semi-synthetic trajectories of the noise-corrupted models. \Cref{sec:conclusion} summarizes the paper. In \ref{app:finite variance estimation} and \ref{app:estimation infinite variance}, we recall estimation methods of finite- and infinite-variance noise-corrupted models, respectively. \ref{app: sensitivity analysis} contains a sensitivity analysis on the performance of the proposed method with respect to a crucial hyperparameter. \ref{app:universality} showcases the universality of the method, and \ref{app:snr} contains analysis based on an alternative performance metric.

The following nomenclature is used throughout the article.
\begin{itemize}
  \renewcommand\labelitemi{} 
\item AR - autoregressive
\item i.i.d. - independent and identically distributed
\item FNN - feedforward neural network
\item NAC - Noisy-As-Clean
\item NR2N - Noisier2Noise
\item Stable-N2N - Stable-Noise2Noise
\item $S\alpha S$-AR - symmetric $\alpha$-stable autoregressive
\item YW - Yule-Walker
\item FLOC - fractional lower-order covariance
\item MAE - mean absolute error
\item EIV - errors-in-variables
\end{itemize}

\section{AR model with additive noise}
\label{sec:AR with additive noise}

\subsection{The \texorpdfstring{$\alpha$}{alpha}-stable distribution}
Univariate $\alpha$-stable distributions (also called L\'evy stable)  are a class of heavy-tailed distributions first introduced by Paul L\'evy in \cite{Levy1925}. The $\alpha$-stable distribution, denoted by $S(\alpha,\beta,\sigma,\mu)$ is characterized by four parameters, namely, stability index  $0 < \alpha \leq 2$, scale parameter  $\sigma >0$,  skewness parameter  $-1 \leq \beta \leq 1$ and shift parameter $\mu \in \mathbb{R} $. Out of these parameters, $\alpha$ determines the tail behavior of the distribution. As $\alpha$ decreases, the tail becomes heavier, leading to a more impulsive nature in the models governed by these distributions.   For $\alpha >1$, mean of the distribution exists and is equal to $\mu$. For more information on the $\alpha$-stable distribution, we refer the readers to the following bibliographic positions \cite{samorodnitsky1994stable,Nolan2020,borak2005stable}.\\
In general, probability densities are not in closed form for this family of distribution except for some particular cases such as Gaussian ($\alpha =2$, $\beta = 0$), L\'evy ($\alpha =1/2$, $\beta = 1$) or Cauchy ($\alpha = 1$, $\beta =0$) distribution. However, we can define $\alpha$-stable distribution in terms of characteristic function \cite{samorodnitsky1994stable}. We note that there are other equivalent ways available in the literature to define the $\alpha$-stable distribution.
\begin{definition}
A random variable $X$ is said to have the $\alpha$-stable distribution with parameters $\alpha \in (0,2]$, $\sigma > 0$, $\beta \in [-1,1]$, $\mu \in \mathbb{R}$ if its characteristic function has the following form
\begin{equation}
 E[\exp(itX)] = 
  \begin{cases}
\exp \{-\sigma^{\alpha}|t|^{\alpha} [1-i \beta \operatorname{sign}(t) \tan (\frac{\pi \alpha}{2})] +i \mu t\},&\hspace{1 mm} \alpha \neq 1 \\
\exp \{-\sigma|t| [1+i \beta \frac{2}{\pi} \operatorname{sign}(t) \ln|t|]  +i \mu t\}, &\hspace{1 mm} \alpha = 1.
\end{cases}
\end{equation}
\end{definition}
In particular, when $\beta =0$ and $\mu =0$, we call the distribution as symmetric $\alpha$-stable distribution ($S\alpha S$ distribution) and denote it by $S(\alpha,\sigma)$.
\begin{definition}
The random variable $X$ is called $S\alpha S$ with parameters $ \alpha \in (0,2] $, $\sigma > 0$ if the characteristic function takes the form 
\begin{equation}
E[exp(it X)] = exp\{-\sigma^\alpha|t|^\alpha\}.
\end{equation}
\end{definition}
For $\alpha = 2$, the $S\alpha S$ random variable $X$ is Gaussian distributed with mean zero. We denote it by $N(0,\nu)$ with variance $\nu$.
\subsection{\texorpdfstring{$S\alpha S$-AR time series model}{SαS-AR time series model}}
The classical second-order AR model is well known in the literature for modeling stationary time series with zero mean and finite second moment (variance) \cite{brockwell2002introduction}. However, this notion is extended in \cite{samorodnitsky1994stable, Gallagher2001, Kruczek2017} by considering a non-Gaussian model with infinite variance where the time series appears to be heavy-tailed in nature and has large values. To this end, $S\alpha S$-AR time series model is defined.
\begin{definition}
A time series $\{X_t\}_{t \in \mathbb{Z}}$ is called the one-dimensional $S\alpha S$-AR model of order $p$ if for each $t \in \mathbb{Z} $ it satisfies the following equation
\begin{equation}\label{eq:stable AR}
 X_t - \theta_1 X_{t-1} -\theta_2 X_{t-2} -\ldots-\theta_p X_{t-p}  = \xi_t, 
\end{equation}
where $\{\xi_t\}_ {t \in \mathbb{Z}}$ is a sequence of i.i.d. S$\alpha S$ random variables with $\alpha \in (1,2]$.

\end{definition}
We denote the one-dimensional $S\alpha S$-AR time series model of order $p$ as $S\alpha S$-AR(p) model. $\theta_1,\ldots,\theta_p$ are called the parameters of the model.\\
The bibliographic position \cite{samorodnitsky1994stable} provides the condition for the case when \Cref{eq:stable AR} admits a stationary solution.
\begin{theorem}
\Cref{eq:stable AR} has a unique solution of the form 
\begin{equation}\label{eq:causal}
 X_t = \sum_{j=0}^{\infty} c_j\xi_{t-j}, \qquad \text{t $\in \mathbb{Z}$} 
\end{equation}
 almost surely, \\
with real $c_j$'s satisfying $|c_j| < Q^{-j}$ eventually, $Q >1$, if and only if the polynomial $\theta (b) = 1- \theta_1 b-\ldots-\theta_{p} b^{p}$ has no root in the closed unit disk $\{b : |b| \leq 1\}$.
The sequence $\{X_t\}_{ t \in \mathbb{Z}}$ is then stationary and follows the $S\alpha S$ distribution.
\end{theorem}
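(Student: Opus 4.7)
The plan is to follow the classical Samorodnitsky--Taqqu argument adapted to stable innovations. For the sufficient direction, assume $\theta(z) = 1 - \theta_1 z - \cdots - \theta_p z^p$ has no zero in $\{z : |z| \leq 1\}$. Since $\theta$ is a polynomial with finitely many zeros, compactness yields $\tilde Q > 1$ such that $\theta(z) \neq 0$ for all $|z| \leq \tilde Q$. Hence $1/\theta$ is holomorphic on the open disk $\{z : |z| < \tilde Q\}$ and admits a power series $1/\theta(z) = \sum_{j=0}^\infty c_j z^j$ there. Cauchy's estimates give $|c_j| \leq M \tilde Q^{-j}$ for some $M>0$, and choosing any $Q \in (1, \tilde Q)$ yields $|c_j| < Q^{-j}$ eventually, which produces exactly the coefficients required by the statement.

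Next, I would define $X_t := \sum_{j=0}^\infty c_j \xi_{t-j}$ and verify three things in turn: convergence, stationarity, and the AR identity. Convergence follows from the standard criterion for linear combinations of i.i.d.\ $S\alpha S$ variables, namely that $\sum_j c_j \xi_{t-j}$ converges almost surely whenever $\sum_j |c_j|^\alpha < \infty$; the latter is immediate from the geometric bound on $c_j$. Stationarity follows because the process is a time-invariant filter of an i.i.d.\ sequence, which is checked at the level of joint characteristic functions. For the AR identity, apply the backward-shift operator $\theta(B) = I - \theta_1 B - \cdots - \theta_p B^p$ to $X_t$; regrouping the finite linear combination of the a.s.\ convergent series, the coefficient of $\xi_{t-k}$ becomes the $k$-th Taylor coefficient of $\theta(z) \cdot (1/\theta(z)) \equiv 1$, which is $1$ when $k=0$ and $0$ otherwise. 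Therefore $\theta(B) X_t = \xi_t$, so the process solves \Cref{eq:stable AR}.

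For necessity and uniqueness, suppose $\{X_t\}$ is of the stated causal form with $|c_j| < Q^{-j}$ eventually and solves \Cref{eq:stable AR}. Its generating function $C(z) = \sum_j c_j z^j$ is holomorphic on $\{|z| < Q\}$. Matching coefficients in $\theta(B) X_t = \xi_t$ forces $\theta(z) C(z) \equiv 1$ on this disk, which precludes any root of $\theta$ in the closed unit disk and simultaneously fixes the $c_j$ as the Taylor coefficients of $1/\theta$, hence uniquely. The coefficient-matching step relies on the elementary fact that $\sum_j a_j \xi_{t-j} = 0$ almost surely forces $a_j \equiv 0$, which is visible by equating characteristic functions.

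The principal technical obstacle is the regrouping step in the verification that $\theta(B) X_t = \xi_t$: one must pass a finite combination of backward shifts through an infinite stable series and collect the contributions of each independent $\xi_{t-k}$. This is legitimate because only finitely many shifts are involved, each $X_{t-i}$ converges a.s., and the rearranged sum has the same characteristic function as the original, which can be verified using the closure of $S\alpha S$ laws under absolutely $\alpha$-summable linear combinations. The stability and symmetry of the marginal and finite-dimensional distributions of $\{X_t\}$ then follow automatically from the same closure property.
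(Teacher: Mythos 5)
Your argument is correct and is essentially the canonical proof of this result: the paper does not prove the theorem itself but cites it from Samorodnitsky and Taqqu, and your route --- expanding $1/\theta$ in a power series with geometrically decaying coefficients via Cauchy estimates, invoking the criterion $\sum_j |c_j|^\alpha < \infty$ for almost sure convergence of $S\alpha S$ series, verifying $\theta(B)X_t=\xi_t$ by regrouping a finite combination of a.s.\ convergent series, and obtaining necessity and uniqueness by matching coefficients through characteristic functions --- is precisely the argument in that reference. The only caveat worth recording is that the convergence criterion in the clean form $\sum_j |c_j|^\alpha<\infty$ applies here exactly because the innovations are symmetric (and the model assumes $\alpha>1$), so the logarithmic correction needed for skewed $\alpha=1$ stable series never enters.
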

 $\{X_t\}_{ t \in \mathbb{Z}}$ is said to be causal if it has a representation in the form of \Cref{eq:causal}.
\subsection{\texorpdfstring{$S\alpha S$-AR(p) time series model with $S\alpha S$ additive noise}{SαS-AR(p) time series model with SαS additive noise}}\label{subsec:noisy ar model}
We extend the $S\alpha S$-AR(p) time series by considering additional noise in the model. $\{X_t\}_{ t \in \mathbb{Z}}$ is assumed to be causal with the i.i.d. innovation sequence $\{\xi_t\}_ {t \in \mathbb{Z}}$ taken from the distribution $S(\alpha_\xi, \sigma_\xi)$, $\alpha_\xi > 1$. In addition, i.i.d. additive noise $\{Z_t\}_{t \in \mathbb{Z}}$ is taken from the distribution $S(\alpha_z, \sigma_z)$, $\alpha_z > 1$, thus generalizing the model considered in \cite{Diversi2005, Diversi2007, Esfandiari2020} by incorporating the presence of heavy-tailed behavior in autoregression and additive noise. We note that the model considered in this paper is a special case of the infinite-variance periodic AR (PAR) model with additive noise, considered in \cite{Zulawinski2024}.\\
The $S\alpha S$-AR(p) time series model with $S\alpha S$ additive noise, $\{Y_t\}_{t \in \mathbb{Z}}$ is defined as
\begin{equation}\label{eq:noise-corrupted AR model}
     Y_t = X_t + Z_t.
\end{equation}
We assume that $\{X_t\}_{t \in \mathbb{Z}}$ and the i.i.d. sequence $\{Z_t\}_{t \in \mathbb{Z}}$ are independent.\\
One can show that \Cref{eq:noise-corrupted AR model} can alternatively be written as 
\begin{equation}
    Y_t - \theta_1 Y_{t-1} -\ldots- \theta_p Y_{t-p} = \xi_t + Z_t -\theta_1 Z_{t-1} -\ldots- \theta_p Z_{t-p}.
\end{equation}
Let us note that the AR time series and additive noise may have different stability indices. In case the stability indices $\alpha_\xi$ and $\alpha_z$ are the same, the noise-corrupted time series $\{Y_t\}_{t \in \mathbb{Z}}$ follows the $S\alpha S$ distribution. Moreover, for values $\alpha_\xi = 2$ and $\alpha_z = 2$, \Cref{eq:noise-corrupted AR model} transforms into Gaussian autoregressive model with Gaussian additive noise, thus representing a noise-corrupted model with finite variance.
\subsection{Dependence measures}
We note that only the Gaussian distribution has finite variance from the class of $S\alpha S$ distribution (when $\alpha$ =2) rendering the classical autocovariance-based dependence measures ineffective for $S\alpha S$ distributions when $\alpha < 2$. Therefore, the notion of some alternative dependence measure must exist for infinite-variance processes. In light of this, FLOC \cite{XinyuMa1996, Nikias1994} is introduced as an alternative dependence measure. In the context of $S\alpha S$ random variables, FLOC is defined as follows:
\begin{definition}
Let $X$ and $Y$ be jointly $S\alpha S$ random variables. Then the FLOC is defined as 
\begin{equation}
 FLOC(X,Y,A,B) = \mathbb{E}[X^{\langle A \rangle}Y^{\langle B \rangle}],
 \end{equation}
 where $A,B \geq 0 $ satisfy the condition $A+B < \alpha$. The signed power $X^{\langle A \rangle}$ is given by $X^{\langle A \rangle} = |X|^Asign(X)$.
\end{definition}
FLOC can be used to measure the interdependence of a given stationary time series $\{X_t\}_{t \in \mathbb{Z}} $ with $S\alpha S$ distribution. In the said context, we define it as auto-FLOC, which is a function of the lag k between two time series data points $X_t$ and $X_{t-k}$.\\
We note that the modified YW method based on FLOC dependency measure (FLOC-YW) has been proposed in  \cite{Zulawinski2022} to estimate the $\alpha$-stable time series model. Furthermore, the FLOC-based EIV method \cite{Zulawinski2024} is employed to estimate the noise-corrupted AR model, where the noise or the AR model itself has infinite variance, thereby extending the EIV methodology from \cite{Diversi2007} for estimating an AR model in the presence of noise with finite variance.\\
The notion of FLOC of a random variable with itself generalizes the variance and is used to measure the dispersion of infinite-variance random variables \cite{Nikias1994, Zulawinski2025}.
\begin{remark}
Covariation is another notable dependence measure defined for jointly $S\alpha S$ random variables when $\alpha >1$. For more information, we refer to the bibliographic position \cite{samorodnitsky1994stable}. Covariation-based modified YW method \cite{Kruczek2017} has also been proposed for estimation of $\alpha$-stable time series models.
\end{remark}
\begin{remark}
For two independent $S\alpha S$ random variables, we have $FLOC(X,Y,A,B) = 0$.
\end{remark}
\begin{remark}
FLOC reduces to classical covariance for a zero-mean process when $\alpha = 2$ and $A = B =1$.
\end{remark}

\section{Denoising methodology}\label{sec:method}
In this section, we introduce our method to recover the pure $S\alpha S$-AR model from its additive noise-corrupted version. Our method is strongly motivated by Noise2Noise \cite{lehtinen2018}. First, we recall the Noise2Noise method for image denoising and then we delve into the proposed methodology later in the section.
\subsection{Background-Noise2Noise}
In problems related to image denoising, supervised learning typically requires a pair of noisy image $y$ and clean image $x$. The denoising neural network therefore trains by mapping the noisy image to the clean version of it. This approach has two major backlogs. Firstly, such pairs of noisy and clean data may not be available. Secondly, if there are significant differences between noise statistics in noisy and clean images, the denoising network may suffer from a domain-gap problem. To address these issues, \cite{lehtinen2018} proposed a learning method in which the network learns by mapping one noisy version $y_1 = x+e_1$ of the clean image $x$ to another noisy version $y_2 = x+e_2$ of it, $e_1,e_2$ being independent, zero-mean noise. This learning approach is justified by the following result.
\begin{theorem}
    Let $y_1$ and $y_2$ be two noisy fixed observations of the same clean image x, i.e; $y_1 = x + e_1$ and $y_2 = x + e_2$ where $e_i$ is the noise. Given that $e_i$ are independent and $\mathbb{E}[e_i] =0$, it follows that
    \begin{equation}
     \underset{\phi}{arg min}\, \mathbb{E}[ \|\Phi(y_1;\phi) - y_2\|^2_2] =  \underset{\phi}{arg min}\, \mathbb{E}[ \|\Phi(y_1;\phi) - x\|^2_2],
     \end{equation}
    where $\Phi$ is the denoising network with parameters $\phi$.
\end{theorem}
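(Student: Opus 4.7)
The plan is to reduce the $y_2$-target objective to the $x$-target objective plus a $\phi$-independent term, so that minimizing one is equivalent to minimizing the other. First I would substitute $y_2 = x + e_2$ into the loss and expand the squared norm as
\begin{equation*}
\|\Phi(y_1;\phi) - y_2\|_2^2 \;=\; \|\Phi(y_1;\phi) - x\|_2^2 \;-\; 2\,\langle \Phi(y_1;\phi) - x,\, e_2\rangle \;+\; \|e_2\|_2^2,
\end{equation*}
then take expectations term by term.

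The key step is to show that the cross term vanishes. Since $\Phi(y_1;\phi)$ is a (measurable) function of $y_1 = x + e_1$, and $x$ is a fixed observation, the random vector $\Phi(y_1;\phi) - x$ depends only on $e_1$. By the hypothesis that $e_1$ and $e_2$ are independent, we can factor the expectation:
\begin{equation*}
\mathbb{E}\bigl[\langle \Phi(y_1;\phi) - x,\, e_2\rangle\bigr] \;=\; \bigl\langle \mathbb{E}[\Phi(y_1;\phi) - x],\, \mathbb{E}[e_2] \bigr\rangle \;=\; 0,
\end{equation*}
where in the last equality I use the zero-mean assumption $\mathbb{E}[e_2] = 0$. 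The remaining term $\mathbb{E}[\|e_2\|_2^2]$ does not depend on $\phi$, so it plays no role in the argmin.

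Combining these observations,
\begin{equation*}
\mathbb{E}\bigl[\|\Phi(y_1;\phi) - y_2\|_2^2\bigr] \;=\; \mathbb{E}\bigl[\|\Phi(y_1;\phi) - x\|_2^2\bigr] \;+\; \mathbb{E}\bigl[\|e_2\|_2^2\bigr],
\end{equation*}
and taking $\arg\min_\phi$ on both sides yields the claim. The main obstacle is really only a conceptual one: making precise the independence claim between $\Phi(y_1;\phi) - x$ and $e_2$. In the statement $x$ is treated as a fixed observation, so randomness lives entirely in $e_1, e_2$, and independence of $e_1, e_2$ transfers to independence of $\Phi(y_1;\phi) - x$ and $e_2$ through the fact that measurable functions of independent random variables remain independent. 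If one instead views $x$ as random (as is natural when averaging over a dataset), the argument still goes through provided $(x,e_1)$ is independent of $e_2$, which is the usual implicit assumption in the Noise2Noise framework.
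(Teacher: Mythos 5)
Your argument is correct and is essentially the same as the one the paper uses: the paper states this background theorem without proof (citing the Noise2Noise work), but its own proof of the analogous Proposition~1 proceeds identically --- expand the squared norm, use independence of the noise from the network input together with the zero-mean assumption to annihilate the cross term, and discard the remaining $\phi$-independent term before taking the argmin. Your variant of centering the expansion at $\Phi(y_1;\phi)-x$ rather than dropping $\|y_2\|_2^2$ first is an equivalent bookkeeping choice, and your closing remark about needing $(x,e_1)$ independent of $e_2$ when $x$ is random is a correct refinement.
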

Therefore, training a network with respect to the mean squared error (MSE) with two noisy instances of the same image is equivalent to training with pairs of clean and noisy images. This approach allows supervised learning without needing the ground-truth images. In practice, if the size of the training data is not too small, Noise2Noise performs almost on par with the learning performed by mapping a noisy image to a clean image.
\subsection{Stable-N2N}\label{subsec:stable-n2n}

The proposed method `Stable-N2N' extends the Noise2Noise methodology in the context of denoising additive 
noise-corrupted $S\alpha S$-AR time series model. In such a model, as noted by \cite{Zulawinski2024}, the existence of a unique solution of $\{X_t\}_{t \in \mathbb{Z}}$ in the form of \Cref{eq:causal} is presumed. Consequently, the solution becomes stationary. The existence of such a solution is an intricate assumption for the estimation of pure time series. The goal of our approach is, given the noise-corrupted time series $\{Y_t\}_{t \in \mathbb{Z}}$ in \Cref{eq:noise-corrupted AR model}, to 
learn the stationary properties of the underlying time series $\{X_t\}_{t \in \mathbb{Z}}$. In particular, when a time series is second-order stationary, stochastic properties such as the mean and interdependence of random variables in the time series do not depend on $t$. We aim to capture these properties of the pure time series through self-supervised learning by mapping one random vector from the noise-corrupted process $\{Y_t\}_{t \in \mathbb{Z}}$ to another random vector from the same process. Our learning approach thus avoids the need to have complete knowledge about the noise model or access to data points from $\{X_t\}_{t \in \mathbb{Z}}$. We justify our approach by the following proposition.
\begin{prop}\label[prop]{res:stable n2n}
We consider a time series $\{Y_t\}_{t \in \mathbb{Z}}$ arising from the model governed by \Cref{eq:noise-corrupted AR model} when $(\alpha_\xi, \alpha_z) \ne (2, 2)$.
Then for random vectors of $\{Y_t\}_{t \in \mathbb{Z}}$, namely $\{Y_{s+1},\ldots,Y_{s+q}\}$ and\\
$\{Y_{s+q+N+1},\ldots,Y_{s+2q+N}\}$, where $N \in \mathbb{N} \cup \{0\}$,
it follows that, 
\begin{equation}\label{eq:n2n for time series with infinite variance}
 \underset{\phi}{arg min}\, \mathbb{E}[ \|\Phi(Y_{s+1}^{\langle B'\rangle},\ldots,Y_{s+q}^{\langle B'\rangle};\phi) -(Y_{s+q+N+1},\ldots,Y_{s+2q+N}) \|_2^2] 
 \end{equation}
 \begin{equation}
 =\underset{\phi}{arg min}\, \mathbb{E}[ \|\Phi(Y_{s+1}^{\langle B'\rangle},\ldots,Y_{s+q}^{\langle B'\rangle};\phi) - (X_{s+q+N+1},\ldots,X_{s+2q+N})\|_2^2],
\end{equation}
where $\Phi$ is the denoising neural network with parameters $\phi$,
$0< B'<\frac{min(\alpha_\xi,\alpha_z)}{2}$.
\end{prop}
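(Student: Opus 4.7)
The approach is to mirror the original Noise2Noise argument, tracking which hypotheses must be replaced in the heavy-tailed setting. Write $\Psi_\phi := \Phi(Y_{s+1}^{<B'>},\ldots,Y_{s+q}^{<B'>};\phi)$, and set $\mathbf{Y} := (Y_{s+q+N+1},\ldots,Y_{s+2q+N})$, $\mathbf{X} := (X_{s+q+N+1},\ldots,X_{s+2q+N})$, and $\mathbf{Z} := (Z_{s+q+N+1},\ldots,Z_{s+2q+N})$, so that \Cref{eq:noise-corrupted AR model} gives $\mathbf{Y} = \mathbf{X} + \mathbf{Z}$. Expanding the first objective,
\begin{equation*}
\mathbb{E}\|\Psi_\phi - \mathbf{Y}\|_2^2 \;=\; \mathbb{E}\|\Psi_\phi - \mathbf{X}\|_2^2 \;-\; 2\,\mathbb{E}\langle \Psi_\phi - \mathbf{X},\, \mathbf{Z}\rangle \;+\; \mathbb{E}\|\mathbf{Z}\|_2^2.
\end{equation*}
The third term does not depend on $\phi$, so the proposition reduces to showing that $\mathbb{E}\langle \Psi_\phi - \mathbf{X}, \mathbf{Z}\rangle$ is zero (or at least constant in $\phi$) for every $\phi$.

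Vanishing of the cross term follows from two independence properties and one moment property, which I would verify in order. Because $\{Z_t\}$ is i.i.d. and $N \geq 0$, the two index blocks $\{s+1,\ldots,s+q\}$ and $\{s+q+N+1,\ldots,s+2q+N\}$ are disjoint, so $\mathbf{Z}$ is independent of $(Z_{s+1},\ldots,Z_{s+q})$. Combining this with the standing assumption $\{Z_t\} \perp \{X_t\}$ shows that $\mathbf{Z}$ is independent of $(Y_{s+1},\ldots,Y_{s+q})$, hence also of $\Psi_\phi$, which is a deterministic function of the signed powers of those $Y_i$. The same cross-independence gives $\mathbf{Z} \perp \mathbf{X}$, so $\mathbf{Z}$ is independent of $\Psi_\phi - \mathbf{X}$. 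Finally, $\alpha_z > 1$ ensures that the mean of each $Z_t$ exists, and symmetry yields $\mathbb{E}[Z_t]=0$, so $\mathbb{E}[\mathbf{Z}] = \mathbf{0}$. Combining these three facts,
\begin{equation*}
\mathbb{E}\langle \Psi_\phi - \mathbf{X},\, \mathbf{Z}\rangle \;=\; \big\langle \mathbb{E}[\Psi_\phi - \mathbf{X}],\, \mathbb{E}[\mathbf{Z}] \big\rangle \;=\; 0,
\end{equation*}
and the argmin identity follows.

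The main technical obstacle is integrability. Because $(\alpha_\xi,\alpha_z) \neq (2,2)$, either $\mathbf{X}$ or $\mathbf{Z}$ can have infinite variance, so $\mathbb{E}\|\mathbf{Z}\|_2^2$ and $\mathbb{E}\|\mathbf{X}\|_2^2$ may be $+\infty$ and the expansion displayed above is only formal. I would resolve this in the same spirit as the original Noise2Noise proof by recasting the claim as an identity of objective differences,
\begin{equation*}
\mathbb{E}\|\Psi_{\phi_1} - \mathbf{Y}\|_2^2 - \mathbb{E}\|\Psi_{\phi_2} - \mathbf{Y}\|_2^2 \;=\; \mathbb{E}\|\Psi_{\phi_1} - \mathbf{X}\|_2^2 - \mathbb{E}\|\Psi_{\phi_2} - \mathbf{X}\|_2^2,
\end{equation*}
and deriving vanishing of the cross term by conditioning on the sigma-algebra generated by $(Y_{s+1},\ldots,Y_{s+q},\mathbf{X})$ and applying the tower property; since $\mathbf{Z}$ is independent of that sigma-algebra and centred, the identity goes through without requiring $\mathbf{Z} \in L^2$. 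The role of the signed-power transform $Y_i^{<B'>}$ with $0<B'<\min(\alpha_\xi,\alpha_z)/2$ is exactly to guarantee $\mathbb{E}|Y_i^{<B'>}|^2 = \mathbb{E}|Y_i|^{2B'} < \infty$, placing the network's inputs in $L^2$; combined with mild regularity on $\Phi(\,\cdot\,;\phi)$ (for example, outputs bounded or with finite second moment, as is standard for neural networks with saturating or Lipschitz activations composed with $L^2$ inputs), this makes the two objective differences genuinely finite and the argmin statement meaningful.
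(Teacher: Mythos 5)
Your proof is correct and follows essentially the same route as the paper's: expand the squared error, substitute $Y=X+Z$ on the target block, and kill the cross term using the independence of the future noise block $(Z_{s+q+N+1},\ldots,Z_{s+2q+N})$ from both the network input and $\mathbf{X}$, together with $\mathbb{E}[Z_t]=0$ (from $\alpha_z>1$ and symmetry). Your closing discussion of integrability is a genuine refinement rather than a detour --- the paper's proof silently adds and discards $\mathbb{E}\|\mathbf{Y}\|_2^2$ and $\mathbb{E}\|\mathbf{X}\|_2^2$, which are infinite whenever $(\alpha_\xi,\alpha_z)\neq(2,2)$, and your difference-of-objectives formulation with conditioning on the input sigma-algebra is what is actually needed to make the argmin identity rigorous in that case.
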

\begin{proof}
Let 
$\Phi(Y_{s+1}^{\langle B'\rangle},\ldots,Y_{s+q}^{\langle B'\rangle};\phi) = (\overline{Y_{s+1}},\ldots,\overline{Y_{s+q}}) $.
\begin{align*}
&\underset{\phi}{\mbox{argmin}}\,\mathbb{E}[ \|\Phi(Y_{s+1}^{\langle B'\rangle},\ldots,Y_{s+q}^{\langle B'\rangle};\phi) - (Y_{s+q+N+1},\ldots,Y_{s+2q+N})\|_2^2]\\
&=\underset{\phi}{\mbox{argmin}}\, \mathbb{E}[ \|\Phi( Y_{s+1}^{\langle B'\rangle},\ldots,Y_{s+q}^{\langle B'\rangle};\phi)\|_2^2 - 2 \sum_{i=1}^qY_{s+q+N+i} \overline{Y_{s+i}}]\\ &\text{(expanding inside the euclidean norm)}\\
&= \underset{\phi}{\mbox{argmin}}\,\mathbb{E}[ \|\Phi( Y_{s+1}^{\langle B'\rangle},\ldots,Y_{s+q}^{\langle B'\rangle};\phi)\|_2^2 - 2 \sum_{i=1}^q(X_{s+q+N+i} + Z_{s+q+N+i}) \overline{Y_{s+i}}].\\
\end{align*}
Now let us note that as per the assumptions made in the context of \Cref{eq:noise-corrupted AR model} in \Cref{subsec:noisy ar model}, $\{Z_t\}_{t \in \mathbb{Z}}$ is i.i.d. $S\alpha S$ sequence independent of $\{X_t\}_{t \in \mathbb{Z}}$. Therefore $Z_{s+q+N+i}$ is independent of the random vector $(Y_{s+1},\ldots,Y_{s+q})$ and also $E[Z_t] = 0$.\\
Therefore,
\begin{align*}
&\underset{\phi}{\mbox{argmin}}\,\mathbb{E}[ \|\Phi( Y_{s+1}^{\langle B'\rangle},\ldots,Y_{s+q}^{\langle B'\rangle};\phi)\|_2^2 - 2 \sum_{i=1}^q(X_{s+q+N+i} + Z_{s+q+N+i}) \overline{Y_{s+i}}]\\
&=\underset{\phi}{\mbox{argmin}}\, \mathbb{E}[ \|\Phi(  Y_{s+1}^{\langle B'\rangle},\ldots,Y_{s+q}^{\langle B'\rangle};\phi)\|_2^2 - 2 \sum_{i=1}^q(X_{s+q+N+i})\overline{Y_{s+i}}]\\
&= \underset{\phi}{\mbox{argmin}}\, \mathbb{E}[ \|\Phi( Y_{s+1}^{\langle B'\rangle},\ldots,Y_{s+q}^{\langle B'\rangle};\phi) - (X_{s+q+N+1},\ldots,X_{s+2q+N})\|_2^2].
\end{align*}
\end{proof}
We consider $(Y_{s+1}^{\langle  B'\rangle},\ldots,Y_{s+q}^{\langle B'\rangle})$ as an input to our denoising network from a computational point of view, since $\mathbb{E}[Y_t^2]$ does not exist when $\alpha_\xi$ and $\alpha_z$ are not simultaneously equal to 2 (following the notations in \Cref{subsec:noisy ar model}). The motivation comes from the notion of generalized variance (see \cite{Nikias1994}). Therefore in the context of denoising noise-corrupted AR model with Gaussian distribution, we take the following approach.
\begin{corollary}\label{res:gaussian n2n}
For a time series $\{Y_t\}_{t \in \mathbb{Z}}$ arising out of the model governed by \Cref{eq:noise-corrupted AR model} when $(\alpha_\xi, \alpha_z) = (2, 2)$  and for random vectors of $\{Y_t\}_{t \in \mathbb{Z}}$, namely $\{ Y_{s+1},\ldots,Y_{s+q}\}$ and $\{Y_{s+q+N+1},\ldots,Y_{s+2q+N}\}$, where $N \in \mathbb{N} \cup \{0\}$,
it follows that 
\begin{equation}\label{eq:n2n for time series with finite variance}
 \underset{\phi}{\mbox{argmin}}\, \mathbb{E}[ \|\Phi(Y_{s+1},\ldots,Y_{s+q};\phi) -(Y_{s+q+N+1},\ldots,Y_{s+2q+N}) \|_2^2] 
 \end{equation}
 \begin{equation}
 =\underset{\phi}{\mbox{argmin}}\, \mathbb{E}[ \|\Phi(Y_{s+1},\ldots,Y_{s+q};\phi) - (X_{s+q+N+1},\ldots,X_{s+2q+N})\|_2^2],
\end{equation}
where $\Phi$ is the denoising neural network with parameters $\phi$.
\end{corollary}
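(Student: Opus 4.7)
The plan is to run the same argument as in the proof of Proposition 1, with the simplification that when $(\alpha_\xi,\alpha_z)=(2,2)$ the variables $Y_t$ have finite second moments, so we do not need to pass them through the signed-power map $Y_t^{<B'>}$ and can feed $Y_{s+1},\ldots,Y_{s+q}$ directly into $\Phi$. All integrals appearing below exist because $Y_t$ and $X_t$ are Gaussian with finite variance, which also justifies the use of the squared-loss criterion in the first place.

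First I would set $\Phi(Y_{s+1},\ldots,Y_{s+q};\phi)=(\overline{Y_{s+1}},\ldots,\overline{Y_{s+q}})$ and expand the squared Euclidean norm inside the first expectation as
\begin{equation*}
\|\Phi - (Y_{s+q+N+1},\ldots,Y_{s+2q+N})\|_2^2 = \|\Phi\|_2^2 - 2\sum_{i=1}^q Y_{s+q+N+i}\,\overline{Y_{s+i}} + \sum_{i=1}^q Y_{s+q+N+i}^2 .
\end{equation*}
The last sum does not depend on $\phi$, so it can be discarded from the $\arg\min$. Then I would substitute $Y_{s+q+N+i}=X_{s+q+N+i}+Z_{s+q+N+i}$ in the cross term, using linearity of expectation to split it into a part involving $X_{s+q+N+i}\,\overline{Y_{s+i}}$ and a part involving $Z_{s+q+N+i}\,\overline{Y_{s+i}}$.

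The core step is to show the $Z$-part vanishes. Since $N\ge 0$ the index $s+q+N+i$ lies strictly above $s+q$, and $\{Z_t\}_{t\in\mathbb{Z}}$ is an i.i.d.\ sequence independent of $\{X_t\}_{t\in\mathbb{Z}}$; hence $Z_{s+q+N+i}$ is independent of the whole window $(Y_{s+1},\ldots,Y_{s+q})$, and therefore independent of the measurable function $\overline{Y_{s+i}}=\Phi_i(Y_{s+1},\ldots,Y_{s+q};\phi)$ of that window. Combined with $\mathbb{E}[Z_t]=0$, this gives $\mathbb{E}[Z_{s+q+N+i}\,\overline{Y_{s+i}}]=\mathbb{E}[Z_{s+q+N+i}]\,\mathbb{E}[\overline{Y_{s+i}}]=0$ for each $i$. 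Re-introducing the $\phi$-independent quantity $\mathbb{E}\bigl[\sum_i X_{s+q+N+i}^2\bigr]$ then lets me complete the square again and recognise the right-hand side of the claim.

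The main obstacle is the clean justification of the independence step, namely that $Z_{s+q+N+i}$ is independent of the random vector $(\overline{Y_{s+1}},\ldots,\overline{Y_{s+q}})$: I need the causal MA representation of $X_t$ from \Cref{eq:causal} to argue that $(Y_{s+1},\ldots,Y_{s+q})$ is a function of $\{\xi_u:u\le s+q\}\cup\{Z_{s+1},\ldots,Z_{s+q}\}$, all of which are independent of the future noise $Z_{s+q+N+i}$ for $N\ge 0$. Once this is in place, the rest is bookkeeping that mirrors the Proposition~1 proof almost verbatim, so I would keep the exposition short and simply note that the signed-power trick is unnecessary in the finite-variance regime.
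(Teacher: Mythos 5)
Your proposal is correct and follows essentially the same route as the paper: the corollary is obtained by running the Proposition~\ref{res:stable n2n} argument verbatim with the signed-power map dropped (since all second moments exist when $(\alpha_\xi,\alpha_z)=(2,2)$), expanding the squared norm, and killing the cross term via independence of the future noise $Z_{s+q+N+i}$ from the input window together with $\mathbb{E}[Z_t]=0$. Your extra care in justifying that independence through the causal representation is a welcome tightening of a step the paper states only briefly, but it is not a different method.
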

We note that, although in the simulations presented in \Cref{sec:simulation}, we have considered the $S\alpha S$ distribution, the results presented in \Cref{res:stable n2n} and the subsequent \Cref{res:gaussian n2n} hold for other distributions as long as the considered moments exist. In this sense, the presented methodology is universal. We further explore this notion in \ref{app:universality}.

\subsection{Proposed methodology for denoising AR model with additive noise}\label{subsec:methodology}
Now, in light of the results presented in \Cref{subsec:stable-n2n}, we build our self-supervised learning approach so that, given noise-corrupted time series data $\{Y_t\}_{t=1}^{n}$, our denoising network can successfully extract the stochastic properties of the pure $S\alpha S$-AR time series data $\{X_t\}_{t=1}^{n}$:
\begin{itemize}
\item Let us consider samples from a $S\alpha S$-AR time series of order $p$, $\{X_t\}_{t=1}^{n}$ with i.i.d. innovation sequence drawn from distribution $S(\alpha_\xi,\sigma_\xi)$, $\alpha_\xi > 1$ and with i.i.d. additive noise $\{Z_t\}_{t=1}^{n}$ drawn from $S(\alpha_z,\sigma_z)$, $\alpha_z > 1$. Let the noise-corrupted time series data $\{Y_t\}_{t=1}^{n}$ be given by
\begin{equation*}
Y_t = X_t + Z_t,
\end{equation*}
for $t = 1,2,\ldots,n$
where,
\begin{equation*}
 X_t = \theta_1X_{t-1} + \theta_2X_{t-2}+\ldots+\theta_pX_{t-p} +\xi_t.
\end{equation*}
\item To construct a learning method from \Cref{res:stable n2n}, we set $N = 0$ so that our training algorithm can learn the stochastic properties of the random vector from the pure time series, that is close to the random vector we aim to obtain from its noise-corrupted version, at the time of inference. The network output is modeled as a multivariate random variable according to the multi-input multi-output (MIMO) approach since, as noted by \cite{Bontempi2008, BenTaieb2010}, MIMO helps to preserve stochastic dependencies in the time series. Assuming that we are processing vectors with $q$ components at a time, the training method is given by constructing windows consisting of input-output pairs, sliding over the entire noise-corrupted dataset $\{Y_t\}_{t=1}^n$.
\begin{align*}
Y_{q+1},\ldots,Y_{2q} &= \Phi(Y_1^{\langle B'\rangle},\ldots,Y_q^{\langle B'\rangle};\phi)\\
Y_{q+2},\ldots,Y_{2q+1} &= \Phi(Y_2^{\langle B'\rangle},\ldots,Y_{q+1}^{\langle B'\rangle};\phi)\\
                     & \vdotswithin{ = }\notag \\
Y_{n-(q-1)},\ldots,Y_n &= \Phi(Y_{n-(2q-1)}^{\langle B'\rangle},\ldots,Y_{n-q}^{\langle B'\rangle};\phi),
\end{align*}
where $\Phi(;\phi)$ is the chosen denoising network and $B'$ is a sufficiently small value. For the noise-corrupted AR model with Gaussian distribution, we choose $B'=1$ (from \Cref{res:gaussian n2n}).\\
Our denoising network for implementing the Stable-N2N method is a FNN with two fully connected hidden layers. The input layer has $10$ nodes, the same as the output layer for processing random vectors with fixed number of components, and the neural network has $22$ nodes in each of the two hidden layers. The network has been designed in this way, keeping its approximation capabilities in mind (see \cite{Kidger2020}), and to model time series of moderate length. We use Rectified Linear Unit (ReLU) \cite{Agarap2018} as the nonlinear activation function in each layer except the output layer, where we use linear activation. We are motivated to use ReLU for its significance in time series modeling and forecasting problems, for example, convolutional layers with ReLU are employed for conditional time series forecasting in \cite{borovykh2017}, or fully connected layers coupled with ReLU are used to model signals from the $\alpha$-stable distribution in \cite{Sathe2023}. In this context, we prefer an architecture with fully connected layers instead of convolutional layers to avoid the effect of smoothing kernels when modeling highly impulsive time series. We believe that a fully connected FNN has the opportunity to learn the required stochastic and temporal dependencies of the underlying pure time series $\{X_t\}_{t=1}^n$ from large enough sliding windows, leveraging the MIMO approach. The self-supervised training procedure is illustrated in \Cref{fig: training paradigm}.

\item After the network is trained to solve the optimization problem in the form of \Cref{eq:n2n for time series with infinite variance} through an iterative process (the number of iterations known as epochs), the trained network is used to construct the denoised trajectories. Thus, by \Cref{res:stable n2n}, in theory we are able to capture the interdependence of the random variables in the pure time series data $\{X_t\}_{t=1}^{n}$ through the solution of \Cref{eq:n2n for time series with infinite variance}. Here we note that, since the network output is a random vector, we get overlapped denoised time series samples. To construct the entire denoised trajectory from $\{Y_t\}_{t=1}^{n}$, we take the first point from each denoised vector except the last, where we use the entire vector.
\begin{align*}
\mathbf{\tilde{X_1}},\ldots,\tilde{X_q} &= \Phi(Y_1^{\langle B'\rangle},\ldots, Y_{q}^{\langle B'\rangle};\phi^*)\\
\mathbf{\tilde{X_2}},\ldots,\tilde{X}_{q+1} &= \Phi(Y_2^{\langle B'\rangle},\ldots, Y_{q+1}^{\langle B'\rangle};\phi^*)\\
                               & \vdotswithin{ = }\notag \\
 \mathbf{\tilde{X}_{n-(q-1)}},\ldots, \mathbf{\tilde{X_{n}}} &= \Phi(Y_{n-(q-1)}^{\langle B'\rangle},\ldots, Y_{n}^{\langle B'\rangle};\phi^*),
 \end{align*}
 where $\Phi(;\phi^*)$ denote the trained network with $\phi^*$ being the solution of \Cref{eq:n2n for time series with infinite variance}. 
\item Thus, we get the denoised time series $\{\tilde{X_{1}},\ldots,\tilde{X_n}\}$ given the noisy realizations $\{Y_t\}_{t=1}^{n}$.
\end{itemize}
The entire workflow consisting of training and inference on $\{Y_t\}_{t=1}^{n}$ is visualized in \Cref{fig: stable-n2n illustration}.
\begin{figure}[htpb]
  \begin{center}
  \includegraphics[width = \textwidth]{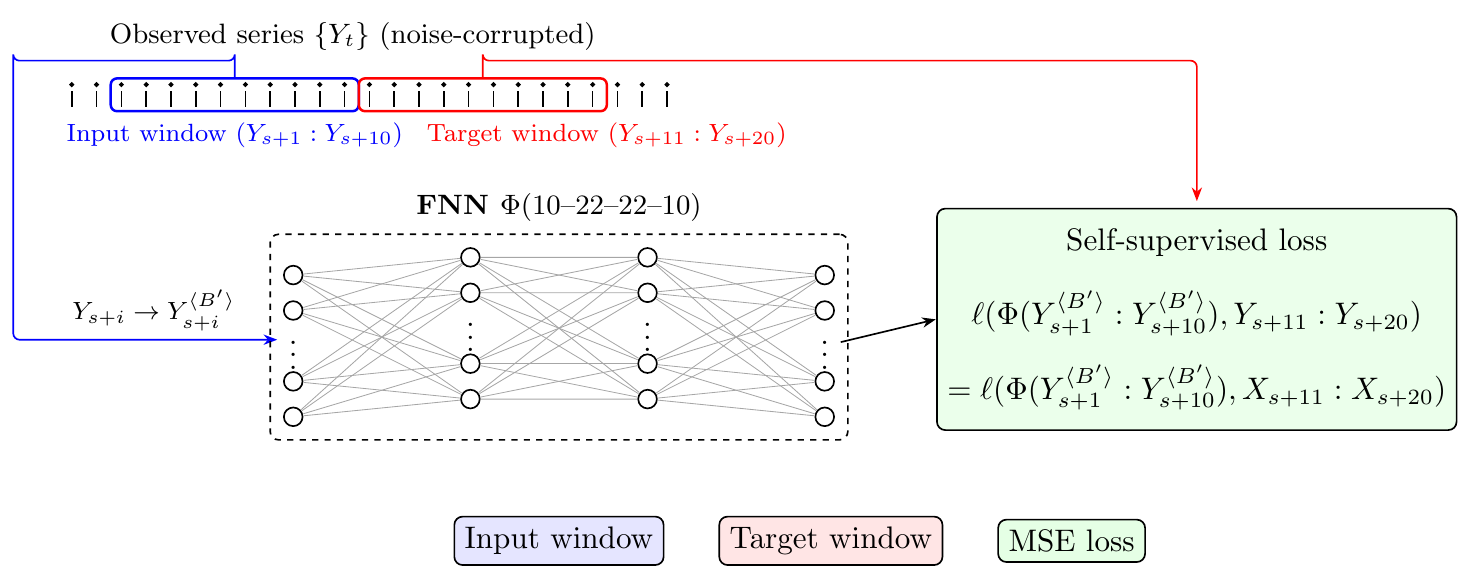}
    \caption {Schematics of training a fully connected FNN in Stable-N2N method.}
    \label{fig: training paradigm}
  \end{center}
\end{figure}

\begin{figure}[htpb]
  \begin{center}
  \includegraphics[width = \textwidth]{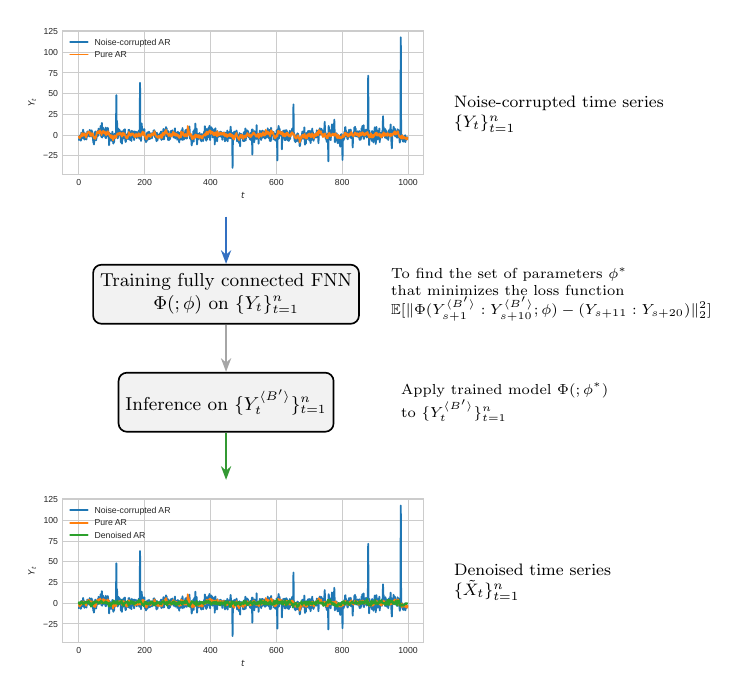}
    \caption {Workflow of the proposed Stable-N2N method.}
    \label{fig: stable-n2n illustration}
  \end{center}
\end{figure}

\subsection{Performance measures}\label{subsec: performance measures}
We validate our proposed methodology in the following way. From \Cref{eq:low order yw stable}, we note that the modified low-order YW (FLOC-YW) method (\Cref{eq:low order modified yw for pure ar}) leads to a biased estimation of the $S\alpha S$-AR($p$) model parameters if additional noise is present. We apply \Cref{eq:low order modified yw for pure ar} on the denoised data $\{\tilde X_t\}_{t=1}^{n}$ to estimate the parameters of the AR model given by $\Theta = [\theta_1,\ldots,\theta_p]$.
Following the notation in \ref{app:estimation infinite variance},
\begin{equation}\label{eq:modified yw on denoised data}
  \hat{\tilde\Gamma}^{\tilde x}\hat {\tilde \Theta} = \hat{\tilde\lambda}^{\tilde x}.
\end{equation}
If the pure data comes from the Gaussian AR model, we use the classical low-order YW method (\Cref{eq: classical low-order yw for pure ar}) on $\{\tilde X_t\}_{t=1}^{n}$ to recover the model parameters. If, through our proposed methodology, the additive noise present in the model is completely removed, the estimated parameters $\hat{\tilde\theta}_1,\ldots,\hat{\tilde\theta}_p$ should be unbiased and close to the true values of the model parameters $\theta_1,\ldots,\theta_p$. We compute the MAE between the true parameter values and the estimated parameters to quantify the bias associated with the estimation process. The formula is given by,
\begin{equation}\label{eq:mae denoising}
 \text{MAE} = \frac{1}{p} \sum_{i= 1}^{p} |\theta_i - \hat {\tilde {\theta_i}}|.
 \end{equation}
We rely on the estimation of the denoised data by the low-order YW equations to judge the denoising quality mainly for the following two reasons:
\begin{itemize}
\item \textbf{Estimation bias is related to noise strength:}\\
The term $\hat \Lambda$ in \Cref{eq:low order yw stable} is responsible for inducing bias in low-order YW equations, when parameters are estimated from noise-corrupted time series $\{Y_t\}_{t=1}^{n}$.
$\hat \Lambda$, empirical FLOC between $\{Y_t\}_{t=1}^{n}$ and $\{Z_t\}_{t=1}^{n}$ should increase with the noise strength because FLOC is a dependency measure and a stronger $\{Z_t\}_{t=1}^{n}$ influences $\{Y_t\}_{t=1}^{n}$ accordingly, since it is an additive noise in the model. In \ref{app:estimation infinite variance}, we empirically show that, indeed, $\hat \Lambda$ increases with stronger noise, thus, driving more bias into the YW equations. The behavior of low-order YW estimators under additive noise have already been noted in \cite{Diversi2007, Esfandiari2020, Zulawinski2023a}. Therefore, using low-order YW estimators on the denoised data acts as a measurement of relative noise strength. In \Cref{sec:simulation}, we will see that the MAE associated with the noise-corrupted data estimation using low-order YW equations increases with the noise strength, thus justifying our choice for the performance metric. If the noise-corrupted model is of finite variance, the variance of the additive noise $\nu_z$ is directly responsible for inducing bias, as seen in \Cref{eq:low order yw gaussian}.
\item \textbf{Estimation is intricately linked to the temporal structure of the underlying pure data}:
The AR model estimation heavily relies on its temporal structure. Therefore, given noise-corrupted data points, estimation from denoised data demonstrates the recovery of temporal relationships of the variables from the underlying pure time series, originally masked by the additive noise in the model. 
\end{itemize}

In addition, we use denoised samples to compute the $5$-steps ahead forecast of the pure time series $\{X_t\}_{t=1}^{n}$, assuming that the model order $p$ is known. If the noise-corrupted model is of finite variance, the order identification problem has already been addressed in \cite{Diversi2007, Zulawinski2023}. For the infinite-variance case, the correct order can be obtained by analyzing the numerical stabilization of the minimization problem \Cref{eq:loss function stable}, as mentioned in \cite{Diversi2007}, or by classical methods from the denoised data. We leave the detailed investigation of this aspect for future work.\\
For $p=2$, $5$-steps ahead forecast is given by, 
\begin{align*}
 \hat{X}_{n+1} &= \hat{\bar\theta}_1 \tilde{X_n} + \hat{\bar\theta}_2 \tilde X_{n-1}\\
  \hat{X}_{n+2} &= \hat{\bar\theta}_1\hat{X}_{n+1} +\hat{\bar\theta}_2 \tilde X_{n}\\
 \hat{X}_{n+i} &= \hat{\bar\theta}_1\hat{X}_{n+i-1} + \hat{\bar\theta}_2\hat{X}_{n+i-2}, \quad \mbox{for}\quad i = 3, 4, 5,
  \end{align*}
where $\hat {\bar\theta}_1, \hat {\bar\theta}_2$ are the parameters estimated from $\{Y_t\}_{t=1}^{n}$ by EIV method (\Cref{eq:eiv for stable}). Note that, when both the noise and the pure AR model are Gaussian, we use \Cref{eq:eiv gaussian} to estimate the parameters of the noise-corrupted model. MAE between the forecast and the original data points is computed to judge the accuracy of the forecast.\\
We choose to use the parameters estimated from $\{Y_t\}_{t=1}^{n}$ to construct $5$-steps ahead forecast to maintain uniformity in parameter estimation for all the considered baseline methods. Our goal is to show that, despite the availability of the EIV method for the estimation of a noise-corrupted model, denoising is crucial for forecasting the pure time series from its noise-corrupted version. In this context, forecasting of the pure time series serves as an important application of the proposed denoising methodology.
\subsection{Baselines and upper bound} 
We propose the following baselines and an upper performance bound to demonstrate the efficacy of the proposed method.
\begin{itemize}
\item Performance baselines: 
\begin{itemize}
\item \textbf{NAC}:  
NAC is a self-supervised learning method originally implemented in the context of image denoising in \cite{Xu2020}. Given noise-corrupted samples $\{Y_t\}_{t=1}^{n}$,  we add simulated noise $\{\breve Z_t\}_{t=1}^{n}$, statistically close to additive noise $\{Z_t\}_{t=1}^{n}$, to $\{Y_t\}_{t=1}^{n}$ and generate noisier samples $\{\breve Y_t\}_{t=1}^{n}$. The training methodology looks like
\begin{equation}\label{eq:nac}
Y_{t-(q-1)},\ldots, Y_t = \Phi(\breve Y_{t-(q-1)},\ldots, \breve Y_t;\phi), \quad t \geq q,
\end{equation}
where $\Phi(;\phi)$ is the denoising network with parameters $\phi$.
After training the network, the inference methodology is given by
\begin{equation}
\tilde X_{t-(q-1)},\ldots,\tilde X_t = \Phi(Y_{t-(q-1)},\ldots, Y_t;\phi^*), \quad t \geq q,
\end{equation}
where $\phi^*$ is the solution of the optimization problem connected to training of the network.
NAC has been shown to perform well when the additive noise is weak.
\item \textbf{NR2N}: NR2N method, introduced in \cite{Moran2020}, trains similarly as in NAC, but at the inference stage, a certain transformation is applied. Given noise-corrupted samples $\{Y_t\}_{t = m_1}^{m_{n_0}}$ as a training set and $\{Y_t\}_{t=1}^{n}$ to denoise, the denoising network trains on noisier samples $\{\breve Y_t\}_{t=m_1}^{m_{n_0}}$ following \Cref{eq:nac} and inference is applied on $\{\breve Y_t\}_{t=1}^{n}$ to recover $\{X_t\}_{t=1}^{n}$ in the following way
\begin{equation}
\tilde X_{t-(q-1)},\ldots,\tilde X_t = 2\Phi(\breve Y_{t-(q-1)},\ldots, \breve Y_t;\phi^*)-(\breve Y_{t-(q-1)},\ldots, \breve Y_t), \quad t \geq q,
\end{equation}
assuming that $\{\breve Z_t\}_{t=1}^{n}$ and $\{\breve Z_t\}_{t=m_1}^{m_{n_0}}$ are statistically close to additive noise $\{Z_t\}_{t=1}^{n}$.
\item \textbf{WDN}: Without denoising or WDN method refers to computing all the performance measures in \Cref{subsec: performance measures} on the noise-corrupted samples $\{Y_t\}_{t=1}^{n}$. We consider this baseline method to judge the denoising quality of Stable-N2N and other baseline methods.\\ In \Cref{eq:modified yw on denoised data}, in place of the denoised dataset $\{\tilde X_t\}_{t=1}^{n}$, we use $\{Y_t\}_{t=1}^{n}$ to compute the estimated parameters
\begin{equation}\label{eq:modified yw on noisy data}
  \hat{\tilde\Gamma}^y \hat {\tilde\Theta} = \hat{\tilde\lambda}^y,
\end{equation}
or, when the pure AR model is Gaussian (see \Cref{eq: classical low-order yw for pure ar}),
\begin{equation}\label{eq:classical yw on noisy data}
  \hat{\Gamma}^y \hat {\tilde\Theta} = \hat{\lambda}^y.
\end{equation}
Since no additive noise is removed, \Cref{eq:classical yw on noisy data,eq:modified yw on noisy data} lead to biased estimation of the model parameters. Thus if we compare the results of \Cref{eq:classical yw on noisy data,eq:modified yw on noisy data} with parameters estimated from the data, denoised through our proposed method, the biasness in the estimation should reduce in the case of Stable-N2N. In other words, the proposed methodology should always perform better than WDN.\\ 
$5$-steps ahead forecast of the pure time series is directly computed from $\{Y_t\}_{t=1}^{n}$ in the following way.
For $p=2$,
\begin{align*}
 \hat{X}_{n+1} &= \hat{\bar\theta}_1 Y_n + \hat{\bar\theta}_2 Y_{n-1}\\
  \hat{X}_{n+2} &= \hat{\bar\theta}_1\hat{X}_{n+1} +\hat{\bar\theta}_2  Y_{n}\\
 \hat{X}_{n+i} &= \hat{\bar\theta}_1\hat{X}_{n+i-1} + \hat{\bar\theta}_2\hat{X}_{n+i-2}, \quad \mbox{for}\quad i = 3, 4, 5.
  \end{align*}
\end{itemize}
\item Performance upper bound:
\begin{itemize}
\item \textbf{N2C}: Noise2Clean or N2C is the supervised learning method. The method assumes access to samples from the pure time series
$\{X_t\}_{t \in Z}$ while training. 
Training with samples $\{Y_t\}_{t = m_1}^{m_{n_0}}$ and $\{X_t\}_{t = m_1}^{m_{n_0}}$ from models $\{Y_t\}_{t \in Z}$ and $\{X_t\}_{t \in Z}$ respectively, is followed as
\begin{equation}
X_{t-(q-1)},\ldots,X_t = \Phi(Y_{t-(q-1)},\ldots,Y_t ;\phi), \quad t \geq m_1+q-1.
\end{equation}
The trained model is then inferred on the samples $\{Y_t\}_{t=1}^{n}$ from the same noise-corrupted model trajectory of $\{Y_t\}_{t \in Z}$ the network trained on, to get the denoised dataset $\{\tilde X_t\}_{t=1}^{n}$.\\ 
We note that, our N2C method has very little chance of suffering from a domain-gap problem since train and test data are from the same time series trajectory. Although such assumptions may not be compatible with real scenarios, the method is expected to perform better than other methods and serves as a perfect upper bound on model performance to our proposed methodology, since our method do not assume access to $\{X_t\}_{t \in Z}$ and requires only $\{Y_t\}_{t=1}^{n}$ to train and learn the patterns of the pure model. However, if the underlying pure data are highly impulsive, there may be significant difference between the train and test dataset, resulting in poor generalization. We will come back to this issue in the next section. 
\end{itemize}
\end{itemize}
The denoising networks of the performance baselines and upper bound are the same FNN architecture, taken in \Cref{subsec:methodology} so that the only difference between the baselines and the proposed method is the technique for achieving self-supervised learning. We note that the superior performance of the supervised learning method as an upper bound can indicate the adequacy of the considered FNN architecture in processing random vectors from the time series trajectories.\\
From a computational efficiency point of view, for a time series of length $n$, the Stable-N2N method, using a sliding window approach, generates $n-2q+1$ random vectors with $q$ components for training in a single epoch, while self-supervised methods considered as baselines generate $n-q+1$ vectors in the same context. However, it is important to mention that the proposed method considers an optimization problem different from baselines.
\section{Simulation study}\label{sec:simulation}
In this section, we validate the proposed methodology on synthetic and semi-synthetic datasets by Monte Carlo simulations. 
\subsection{Implementation details}\label{subsec:implementation}
The denoising networks are trained using the AdamW optimizer \cite{LoshchilovH19}, with a constant learning rate of $0.001$ and a weight-decay parameter of $0.0001$ to mitigate overfitting. The networks are trained for $30$ epochs with a batch size of $10$.
We note that the set of hyperparameters of our designed FNN network remains invariant for all the considered time series trajectories for both synthetic and semi-synthetic data. The reason is that we lack sufficient information about the noise or the autoregression from the noise-corrupted data, particularly when the AR model is corrupted with infinite-variance $S\alpha S$ noise. Following the same logic, we set the value of $B'$, used in the training of $S\alpha S$ trajectories in the proposed method, $0.45$ throughout, a sufficiently small value, considering the model assumptions $\alpha_\xi,\alpha_z >1$, except when the noise-corrupted trajectories generated are Gaussian distributed, we set $B' = 1$ in that case. The neural network for each simulated time series trajectory is built using Keras \cite{chollet2015keras} in TensorFlow \cite{tensorflow2015-whitepaper} and Python. To stabilize the training dynamics, we train the networks without shuffling the random vectors between epochs, thereby maintaining the temporal structure of the data throughout.\\
Given samples $\{Y_t\}_{t=1}^{n}$ from the noise-corrupted model, the proposed methodology Stable-N2N
and the baseline method NAC train on $\{Y_t\}_{t=1}^{n}$ and do not require any additional data to recover $\{X_t\}_{t=1}^{n}$. However, NR2N and N2C are implemented as dataset-based methods. To this end, we generate additional samples from the same trajectory, the samples $\{Y_t\}_{t=1}^{n}$ are taken from. NR2N requires only the noisy realizations, while N2C requires samples from both noise-corrupted and pure models. \\
The value of $\bar B$ in  \ref{app:estimation infinite variance} to implement the EIV method is taken as $0.45$ for all the considered cases, since we do not have enough information on additive noise $S(\alpha_z,\sigma_z)$ from the noise-corrupted data. The value of $r$ in \ref{app:finite variance estimation} and \ref{app:estimation infinite variance} to construct high-order YW equations is taken as $2$.

\subsection{Results on synthetic data}\label{subsec:synthetic datasets}
We consider a few different cases in our simulation study to accommodate both finite- and infinite-variance models. We simulate $M = 1000$ trajectories of $\{X_t\}_{t=1}^{n}$, pure $S\alpha S$-AR time series of order $p = 2$ with AR parameters $(\theta_1,\theta_2) =(0.5,0.3)$ and sample size $n_a = 1010$ corresponding to each innovation sequence $\{\xi_t\}_{t=1}^{n}$ considered in the simulation study. Furthermore, we add simulated noise from the model in \Cref{subsec:noisy ar model} to the pure time series trajectories to generate the $1000$ trajectories of $\{Y_t\}_{t=1}^{n}$, the noise-corrupted $S\alpha S$-AR time series with sample size $n = 999$, for each of the noise models considered.
We note that the additional data points in the simulated pure time series trajectories are used to compute $5$-steps ahead forecasting accuracies in terms of MAE. Moreover, when dataset-based methods are implemented, $999$ additional samples are generated from each simulated trajectory considered for training.

\subsubsection{Gaussian AR model with Gaussian and non-Gaussian noise}\label{subsubsec: Gaussian AR model with noise}
Firstly, we consider the Gaussian AR model with Gaussian additive noise, which is a finite-variance model. Following the notations in \Cref{subsec:noisy ar model}, to emulate a Gaussian AR model, the innovation sequence $\{\xi_t\}_{t=1}^{n}$ is taken from the Gaussian distribution with mean zero, $N(0,\nu_\xi)$, with variance $\nu_\xi = 1$. In addition, additive noise $\{Z_t\}_{t=1}^n$ is also taken from the zero-mean Gaussian distribution $N(0,\nu_z)$, with variance $\nu_z$ ranging from $\nu_z = 5$ to $\nu_z = 15$. The generated trajectories are depicted in \Cref{fig: noisy gaussian trajectories}. We note that with increasing level of $\nu_z$, the noise in the model becomes stronger.\\
\begin{figure}[htpb]
  \begin{center} 
\includegraphics[width = \textwidth]{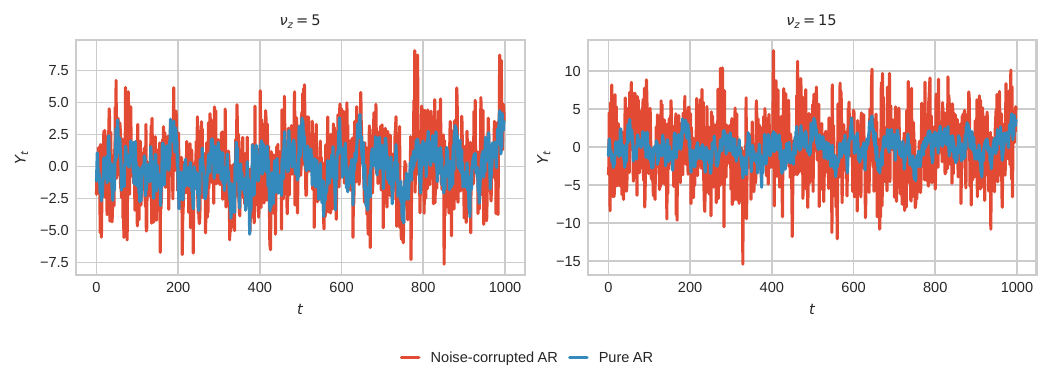}
  \caption {Sample trajectories of the Gaussian AR($2$) model with  Gaussian additive noise $N(0,\nu_z)$ with mean 0 and variance $\nu_z$.}
   \label{fig: noisy gaussian trajectories}
  \end{center}
\end{figure}
Further, we consider the Gaussian AR model with impulsive noise. To this end, the additive noise $\{Z_t\}_{t=1}^{n}$ is drawn from the distribution $S(\alpha_z,\sigma_z) $. Notably, the intensity of the additive noise depends on two parameters $\alpha_z$ and $\sigma_z$. The intensity of the noise increases with higher $\sigma_z$ whereas lower $\alpha_z$ results in sharp impulses in the data. Therefore, low $\alpha_z$ in combination with high $\sigma_z$ results in strongly impulsive noise. In the simulation study, we take $\alpha_z = 1.5,1.7$ with $\sigma_z$ ranging from $1$ to $2$ for each $\alpha_z$, so that different noise level scenarios can be analyzed. The sample trajectories are illustrated in \Cref{fig: gaussian with stable noise}. We note that the noise impulses are stronger for the sample trajectory in the right panel of the figure, being generated from relatively lower $\alpha_z$ and higher $\sigma_z$ as noise parameters.
\begin{figure}[htpb]
  \begin{center}
   \includegraphics[width = \textwidth]{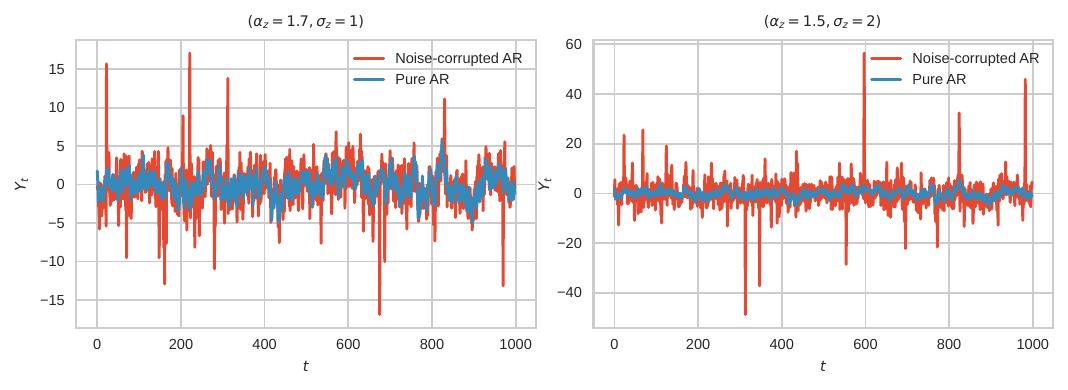}
    \caption{Sample trajectories of Gaussian AR($2$) model with additive noise from $S(\alpha_z,\sigma_z)$ distribution.}
    \label{fig: gaussian with stable noise}
  \end{center}
\end{figure}

Next, we present the box plots of the parameters estimated on the denoised data $\{\tilde X_t\}_{t=1}^{n}$ by the classical low-order YW method, since for both Gaussian and non-Gaussian noise, the underlying pure AR model is taken to be Gaussian. We note that to generate additional noise $\{\breve Z_t\}_{t \in \mathbb{Z}}$ for creating noisier samples in the NAC and NR2N method, we compute $\hat \nu_z$, an estimate of variance of additive noise, from \Cref{eq:loss function gaussian}, when the noise-corrupted model is Gaussian distributed and draw samples from the distribution $N(0,\hat \nu_z)$ accordingly. However, we do not have such estimates of the noise parameters when the additive noise $\{Z_t\}_{t \in \mathbb{Z}}$ is $S(\alpha_z,\sigma_z)$ distributed. In such cases, we perform \textbf{blind denoising} by drawing from the distribution $S(\alpha_{\breve z},\sigma_{\breve z})$ with $\alpha_{\breve z} \in [1.5,1.9]$ and $\sigma_{\breve z} \in [1,2.5]$ chosen uniformly for each simulated trajectory across all parameter values.\\
The estimation results for the denoised Gaussian AR model with Gaussian additive noise are shown in \Cref{fig: coeff_gaussian} and \Cref{tab:classical yw on finite variance data}. Stable-N2N outperforms the other baseline methods, despite NAC and NR2N being implemented with complete knowledge of noise. A possible reason is that NAC and NR2N train on noisier samples, thus never see the temporal and stochastic relationships in the pure time series while training. 

 \begin{figure}[htpb]
  \begin{center}
   \includegraphics[width = \textwidth]{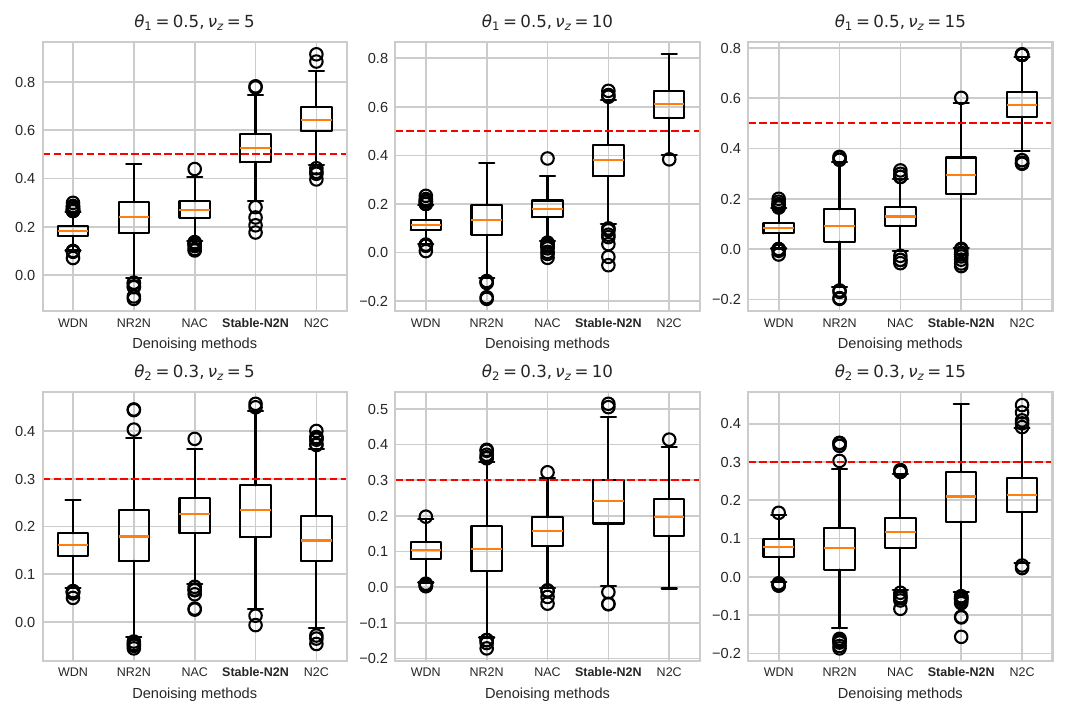}
    \caption {Box plots of parameters estimated with classical YW (\Cref{eq: classical low-order yw for pure ar}) on $1000$ denoised trajectories of Gaussian AR($2$) model with Gaussian additive noise $N(0,\nu_z)$.}
    \label{fig: coeff_gaussian}
  \end{center}
\end{figure}
\begin{table}[htpb]
\centering
\setlength{\tabcolsep}{12pt}
\begin{tabular}{l@{\hspace{1.5em}}ccccc}
\toprule
$\{Z_t\} \sim N(0,\nu_z)$ & WDN & NR2N & NAC & \textbf{Stable-N2N} & N2C \\
\midrule
$\nu_z$ = 5 & $0.2277$ & $0.1949$ & $0.1549$ & $\textbf{0.0783}$ & $0.1365$ \\
$\nu_z$ = 10 & $0.2902$ & $0.2807$ & $0.2331$ & $\textbf{0.1085}$ & $0.1117$ \\
$\nu_z$ = 15 & $0.3196$ & $0.3158$ & $0.2766$ & $\textbf{0.1618}$ & $0.0894$ \\
\bottomrule
\end{tabular}
\caption{Average MAE computed between true parameters and parameters estimated from classical YW (\Cref{eq: classical low-order yw for pure ar}) on $1000$ denoised trajectories of Gaussian AR($2$) model with Gaussian additive noise $N(0,\nu_z)$.}
\label{tab:classical yw on finite variance data}
\end{table}

\begin{figure}[htpb]
  \begin{center}
   \includegraphics[width = \textwidth]{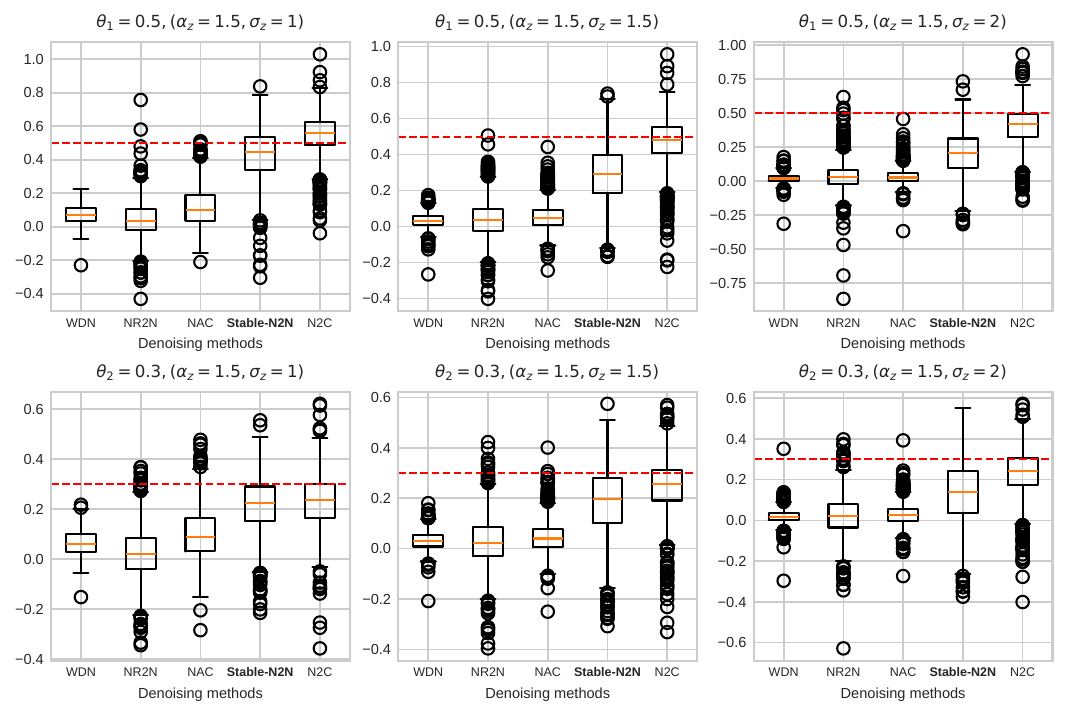}
    \caption {Box plots of parameters estimated with classical YW (\Cref{eq: classical low-order yw for pure ar}) on $1000$ denoised trajectories of Gaussian AR($2$) model with $S(\alpha_z = 1.5,\sigma_z)$ distributed additive noise.}
    \label{fig:coeff_gaussian_1.5}
  \end{center}
  \end{figure}
\begin{figure}[htpb]
  \begin{center}
   \includegraphics[width = \textwidth]{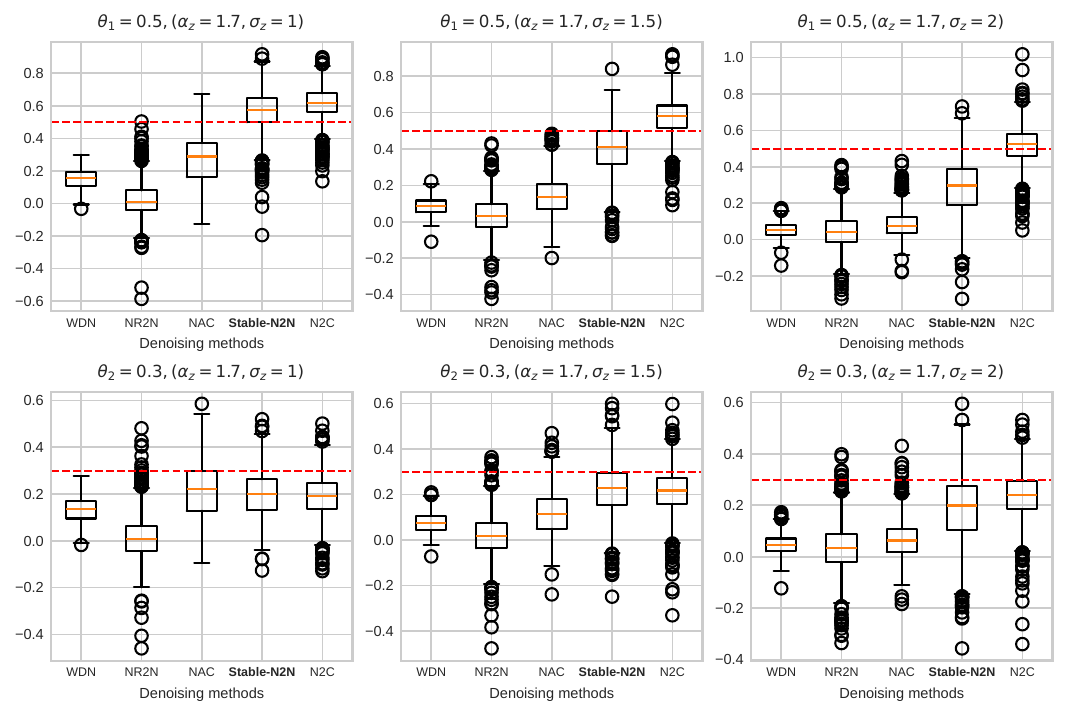}
    \caption {Box plots of parameters estimated with classical YW (\Cref{eq: classical low-order yw for pure ar}) on $1000$ denoised trajectories of Gaussian AR($2$) model with $S(\alpha_z = 1.7,\sigma_z)$ distributed additive noise.}
    \label{fig:coeff_gaussian_1.7}
  \end{center}
\end{figure}
\begin{table}[htpb]
\centering
\setlength{\tabcolsep}{12pt}
\begin{tabular}{l@{\hspace{2em}}ccccc}
\toprule
$\{Z_t\} \sim S(\alpha_z,\sigma_z)$ & WDN & NR2N & NAC & \textbf{Stable-N2N} & N2C \\
\midrule
$\alpha_z = 1.5,\sigma_z = 1$ & $0.3305$ & $0.3659$ & $0.2910$ & $\textbf{0.1209}$ & $0.1003$ \\
$\alpha_z = 1.5,\sigma_z = 1.5$ & $0.3648$ & $0.3678$ & $0.3484$ & $\textbf{0.1810}$ & $0.0938$ \\
$\alpha_z = 1.5,\sigma_z = 2$ & $0.3781$ & $0.3728$ & $0.3679$ & $\textbf{0.2413}$ & $0.1162$ \\
\midrule
$\alpha_z = 1.7,\sigma_z = 1$ & $0.2589$ & $0.3840$ & $0.1737$ & $\textbf{0.1163}$ & $0.1234$ \\
$\alpha_z = 1.7,\sigma_z = 1.5$ & $0.3193$ & $0.3722$ & $0.2687$ & $\textbf{0.1199}$ & $0.1025$ \\
$\alpha_z = 1.7,\sigma_z = 2$ & $0.3496$ & $0.3601$ & $0.3246$ & $\textbf{0.1793}$ & $0.0834$ \\
\bottomrule
\end{tabular}
\caption{Average MAE computed between true parameters and parameters estimated from classical YW (\Cref{eq: classical low-order yw for pure ar}) on $1000$ denoised trajectories of Gaussian AR($2$) model with $S(\alpha_z ,\sigma_z)$ distributed additive noise.}
\label{tab:classical yw on infinite variance data}
\end{table}
Estimation results for denoised Gaussian AR time series with non-Gaussian additive noise are presented in \Cref{tab:classical yw on infinite variance data} and illustrated in \Cref{fig:coeff_gaussian_1.5,fig:coeff_gaussian_1.7}. We note the bias in the estimation of the data denoised by NR2N and NAC, aggravated by the blind denoising. While the considered baseline methods find it harder to remove impulsive noise, Stable-N2N considerably outperforms them, even on par with the upper bound, N2C, for relatively weak noise. \\
Lastly, we compute the $5$-steps ahead forecast of the pure Gaussian AR model from the denoised data points. The average MAE between the forecast and the true values of the pure time series after $M = 1000$ simulations is given by
\begin{equation}\label{eq:average mae forecasting}
 E = \frac{1}{M}\sum_{i=1}^{M}\left(\frac{1}{5} \sum_{t= n+1}^{n+5} |X_t - \hat{X_t}|\right).
\end{equation}
The forecast results are depicted in \Cref{fig:gaussian_forecasting}.\\
We observe that Stable-N2N yields lower values of forecast accuracy ($E$) than the baseline methods, demonstrating the importance of denoising in the context of forecasting the pure AR model.

 \begin{figure}[htpb]
  \begin{center}
   \includegraphics[]{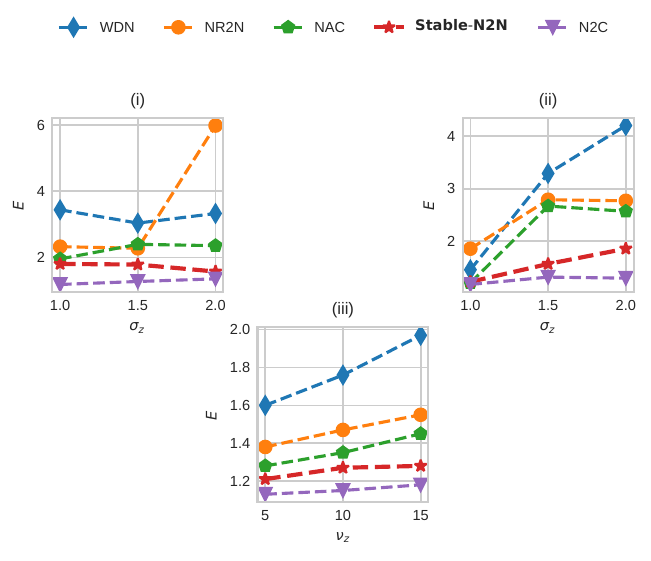}
    \caption {5-steps ahead forecast accuracy ($E$) computed from 1000 denoised trajectories of Gaussian AR($2$) model with (i) $S(\alpha_z = 1.5, \sigma_z)$, (ii) $S(\alpha_z = 1.7, \sigma_z)$, (iii) $N(0,\nu_z)$ distributed additive noise.}
    \label{fig:gaussian_forecasting}
  \end{center}
  \end{figure}
  
\subsubsection{Non-Gaussian AR model with non-Gaussian noise}\label{subsubsec:non-gaussian ar with non-gaussian noise}
In this section, we consider noise-corrupted trajectories, where the underlying pure AR model is impulsive. To this end, we generate AR model from the innovation sequence $\{\xi_t\}_{t=1}^{n}$ with the $S(\alpha_\xi, \sigma_\xi)$ distribution. We have previously noted that different values of the parameters in the distribution $S(\alpha, \sigma)$ result in different impulsive behaviors in the generated samples. Therefore, we consider samples from $S\alpha S$-AR time series trajectories generated by $\{\xi_t\}_{t=1}^{n} \sim S(\alpha_\xi = 1.9, \sigma_\xi = 1)$ and $\{\xi_t\}_{t=1}^{n} \sim S(\alpha_\xi = 1.5, \sigma_\xi = 0.5)$, respectively. Further, the non-Gaussian additive noise $\{Z_t\}_{t=1}^{n}$ is drawn from the distribution $S(\alpha_z,\sigma_z) $  with $\alpha_z = 1.5,1.7$ and $\sigma_z$ ranging from $1.5$ to $2.5$ corresponding to each $\alpha_z$. The sample trajectories are shown in \Cref{fig:stable with stable noise}. We observe that the pure $S\alpha S$-AR model trajectories generated from $\{\xi_t\}_{t=1}^{n} \sim S (\alpha_\xi = 1.5, \sigma_\xi = 0.5)$ and depicted in the lower panel of \Cref{fig:stable with stable noise}, showcase extreme fluctuations. The noise in the trajectories contains several impulses, coming from a non-Gaussian distribution with stronger noise generated from lower $\alpha_z$ and higher $\sigma_z$ as seen in the previous scenarios as well. Here, we note that the noise parameters in our analysis are chosen in such a way that the noise levels grow from weaker to stronger. Moreover, the weaker noise is still chosen to be significant, so that the proposed methodology can be differentiated from the baseline methods. Stronger noise levels are chosen in such a way that the autoregression remains significant compared to the additional noise in the model.
\begin{figure}[htpb]
  \begin{center}
   \includegraphics[width = \textwidth]{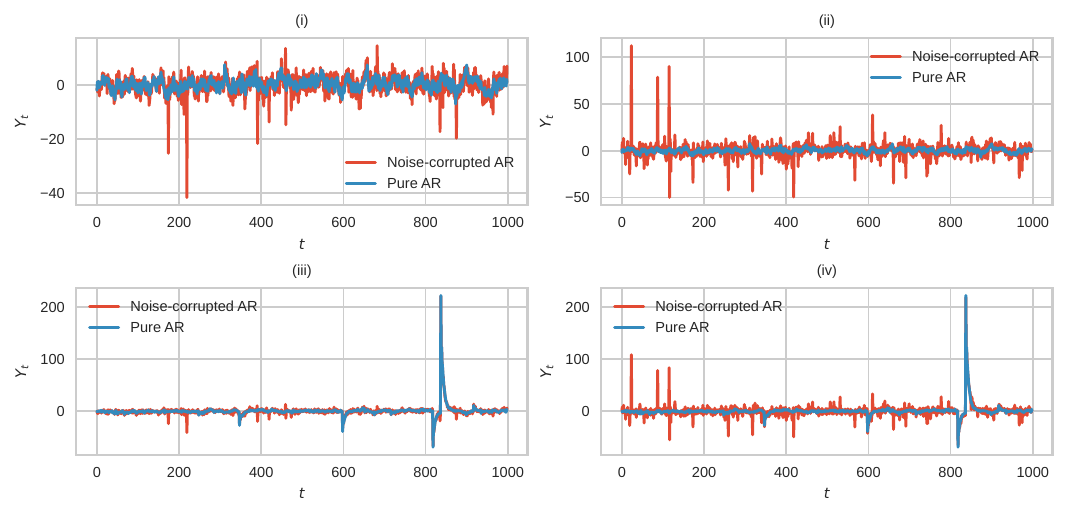}
     \caption {Sample trajectories of $S\alpha S$-AR($2$) model with additive $S\alpha S$ noise corresponding to (i) $ S(\alpha_\xi =1.9,\sigma_\xi =1)$, $S(\alpha_z = 1.7,\sigma_z = 1.5)$; (ii) $ S(\alpha_\xi =1.9,\sigma_\xi =1)$, $S(\alpha_z = 1.5,\sigma_z = 2.5)$; (iii) $ S(\alpha_\xi =1.5,\sigma_\xi = 0.5)$, $S(\alpha_z = 1.7,\sigma_z = 1.5)$; (iv) $ S(\alpha_\xi =1.5,\sigma_\xi = 0.5)$, $S(\alpha_z = 1.5,\sigma_z = 2.5)$. }
    \label{fig:stable with stable noise}
  \end{center}
\end{figure}
To create noisier samples in the implementation of NAC and NR2N, we follow the previously established setting in the case of blind denoising, that is, drawing from the distribution $S(\alpha_{\breve z},\sigma_{\breve z})$ blindly, with $\alpha_{\breve z} \in [1.5,1.9]$ and $\sigma_{\breve z} \in [1,2.5]$ chosen uniformly for each simulated trajectory across all parameter values.
Since the underlying pure model is infinite-variance, we use the modified YW method (\Cref{eq:low order modified yw for pure ar}) based on the notion of FLOC dependence measure on the denoised data to recover the model parameters. We use $A = 1, \: B = 0.45$ for both of the considered innovation sequences, satisfying the condition $A+B < \alpha_\xi$.\\
We present the estimation results from 1000 denoised trajectories of the noise-corrupted $S\alpha S$-AR($2$) model with innovation sequence $\{\xi_t\}_{t=1}^{n} \sim S(\alpha_\xi = 1.9, \sigma_\xi = 1)$ for each of the considered noise models in \Cref{tab:floc yw on infinite variance data} and illustrate them in \Cref{fig:coeff_stable_1.5,fig:coeff_stable_1.7}. Furthermore, estimation results for the considered innovation sequence $\{\xi_t\}_{t=1}^{n} \sim S(\alpha_\xi = 1.5, \sigma_\xi = 0.5)$ are presented in \Cref{tab:floc yw on infinite variance data(1)} and shown in \Cref{fig:coeff_stable(1)_1.5,fig:coeff_stable(1)_1.7}. We observe the presence of large outliers in \Cref{fig:coeff_stable(1)_1.5,fig:coeff_stable(1)_1.7} compared to \Cref{fig:coeff_stable_1.5,fig:coeff_stable_1.7}. This may be due to the extreme non-Gaussian nature of the pure model generated from $\{\xi_t\}_{t=1}^{n} \sim S(\alpha_\xi = 1.5, \sigma_\xi = 0.5)$. In particular, a dataset-based method like N2C, which otherwise performs better in most simulation settings, may face a significant difference between the train and the test dataset for some simulated trajectories. This issue is reiterated in the $5$-steps ahead forecast result shown in \Cref{fig:stable_forecasting}. We note the extreme values of the forecast error $E$ (\Cref{eq:average mae forecasting}) corresponding to dataset-based methods NR2N and N2C for some of the noise models. The results prove the superiority of the proposed methodology in recovering highly impulsive non-Gaussian AR data from its noise-corrupted versions. 
\begin{figure}[htpb]
  \begin{center}
   \includegraphics[width = \textwidth]{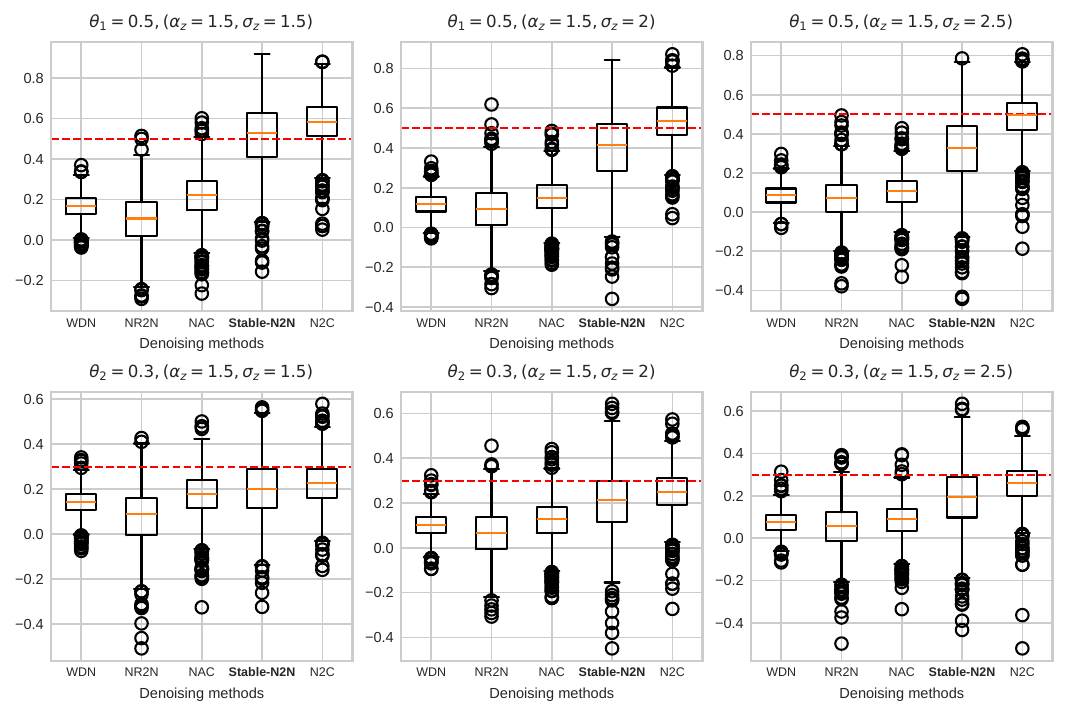}
    \caption {Box plots of parameters estimated with FLOC-YW (\Cref{eq:low order modified yw for pure ar}) on $1000$ denoised trajectories of noise-corrupted $S\alpha S$-AR($2$) model with innovation sequence $\{\xi_t\}_{t=1}^{n} \sim S(\alpha_\xi = 1.9, \sigma_\xi = 1)$ and $S(\alpha_z = 1.5,\sigma_z)$ distributed additive noise.}
    \label{fig:coeff_stable_1.5}
  \end{center}
  \end{figure}
\begin{figure}[htpb]
  \begin{center}
   \includegraphics[width = \textwidth]{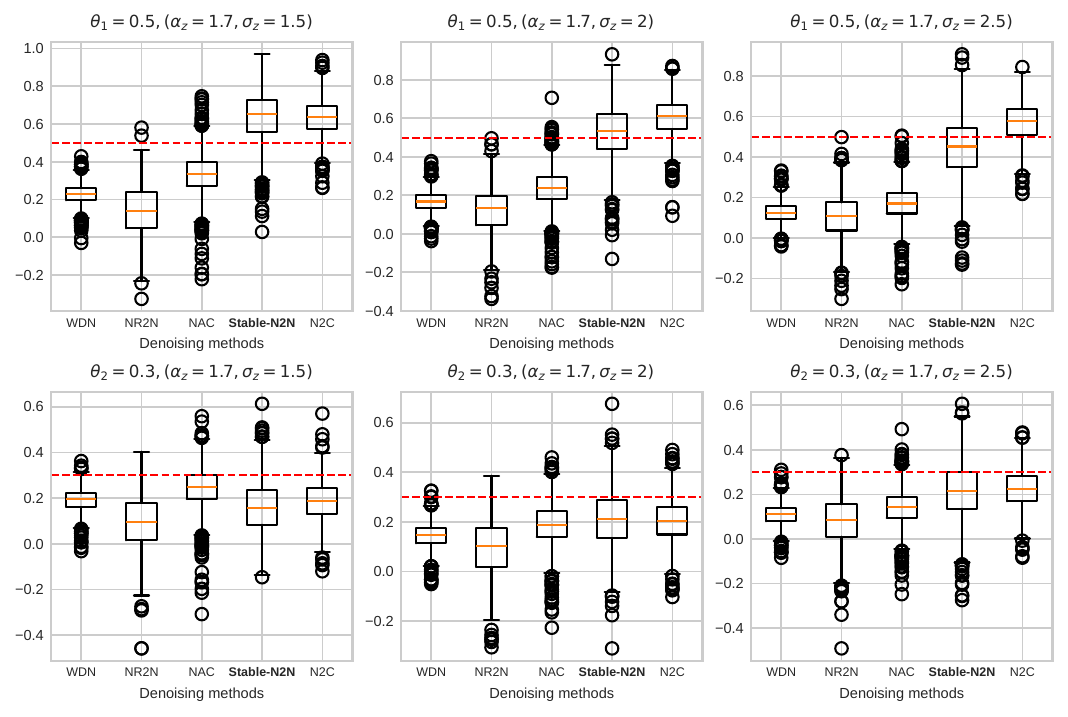}
    \caption {Box plots of parameters estimated with FLOC-YW (\Cref{eq:low order modified yw for pure ar}) on $1000$ denoised trajectories of noise-corrupted $S\alpha S$-AR($2$) model with innovation sequence $\{\xi_t\}_{t=1}^{n} \sim S(\alpha_\xi = 1.9, \sigma_\xi = 1)$ and $S(\alpha_z = 1.7,\sigma_z)$ distributed additive noise. }
    \label{fig:coeff_stable_1.7}
  \end{center}
\end{figure}
\begin{table}[htpb]
\centering
\setlength{\tabcolsep}{12pt}
\begin{tabular}{l@{\hspace{2em}}ccccc}
\toprule
$\{Z_t\} \sim S(\alpha_z,\sigma_z)$ & WDN & NR2N & NAC & \textbf{Stable-N2N} & N2C \\
\midrule
$\alpha_z = 1.5,\sigma_z = 1.5$ & $0.2479$ & $0.3123$ & $0.2108$ & $\textbf{0.1320}$ & $0.1066$ \\
$\alpha_z = 1.5,\sigma_z = 2$ & $0.2919$ & $0.3237$ & $0.2651$ & $\textbf{0.1441}$ & $0.0871$ \\
$\alpha_z = 1.5,\sigma_z = 2.5$ & $0.3206$ & $0.3397$ & $0.3054$ & $\textbf{0.1786}$ & $0.0850$ \\
\midrule
$\alpha_z = 1.7,\sigma_z = 1.5$ & $0.1917$ & $0.2870$ & $0.1275$ & $\textbf{0.1578}$ & $0.1321$ \\
$\alpha_z = 1.7,\sigma_z = 2$ & $0.2449$ & $0.2928$ & $0.1913$ & $\textbf{0.1156}$ & $0.1147$ \\
$\alpha_z = 1.7,\sigma_z = 2.5$ & $0.2830$ & $0.3082$ & $0.2462$ & $\textbf{0.1234}$ & $0.0950$ \\
\bottomrule
\end{tabular}
\caption{Average MAE computed between true parameters and parameters estimated from FLOC-YW (\Cref{eq:low order modified yw for pure ar}) on $1000$ denoised trajectories of noise-corrupted $S\alpha S$-AR($2$) model with innovation sequence $\{\xi_t\}_{t=1}^{n} \sim S(\alpha_\xi = 1.9, \sigma_\xi = 1)$ and $S(\alpha_z,\sigma_z)$ distributed additive noise. }
\label{tab:floc yw on infinite variance data}
\end{table}

\begin{figure}[htpb]
  \begin{center}
   \includegraphics[width = \textwidth]{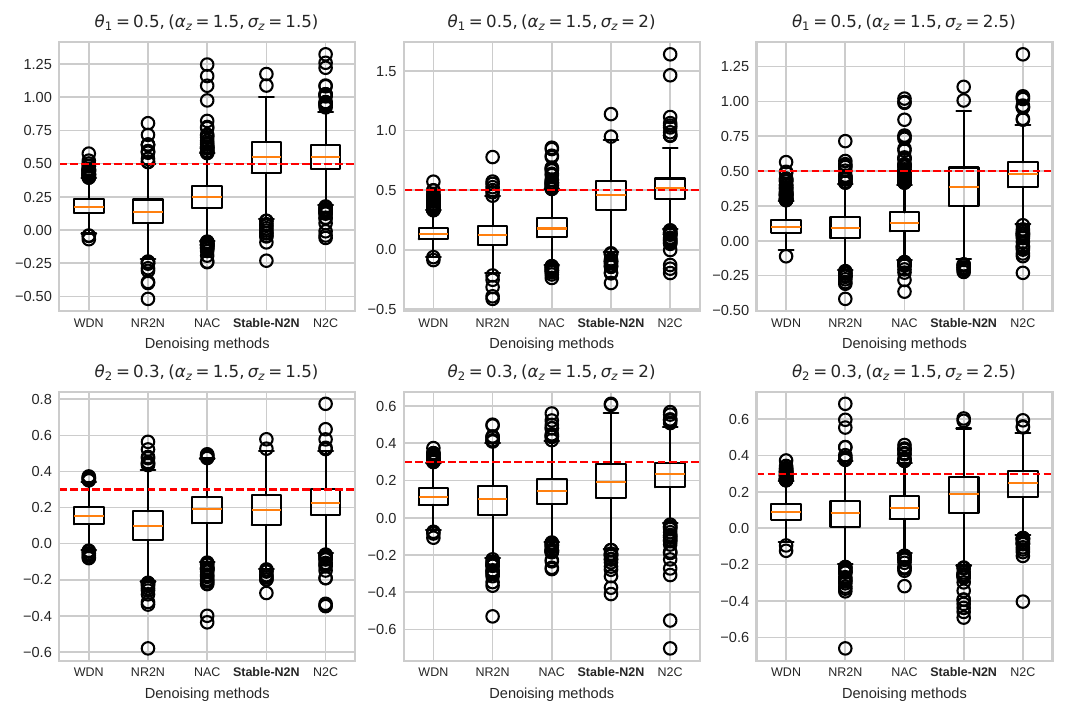}
    \caption {Box plots of parameters estimated with FLOC-YW (\Cref{eq:low order modified yw for pure ar}) on $1000$ denoised trajectories of noise-corrupted $S\alpha S$-AR($2$) model with innovation sequence $\{\xi_t\}_{t=1}^{n} \sim S(\alpha_\xi = 1.5, \sigma_\xi = 0.5)$ and $S(\alpha_z = 1.5,\sigma_z)$ distributed additive noise.}
    \label{fig:coeff_stable(1)_1.5}
  \end{center}
  \end{figure}
\begin{figure}[htpb]
  \begin{center}
   \includegraphics[width = \textwidth]{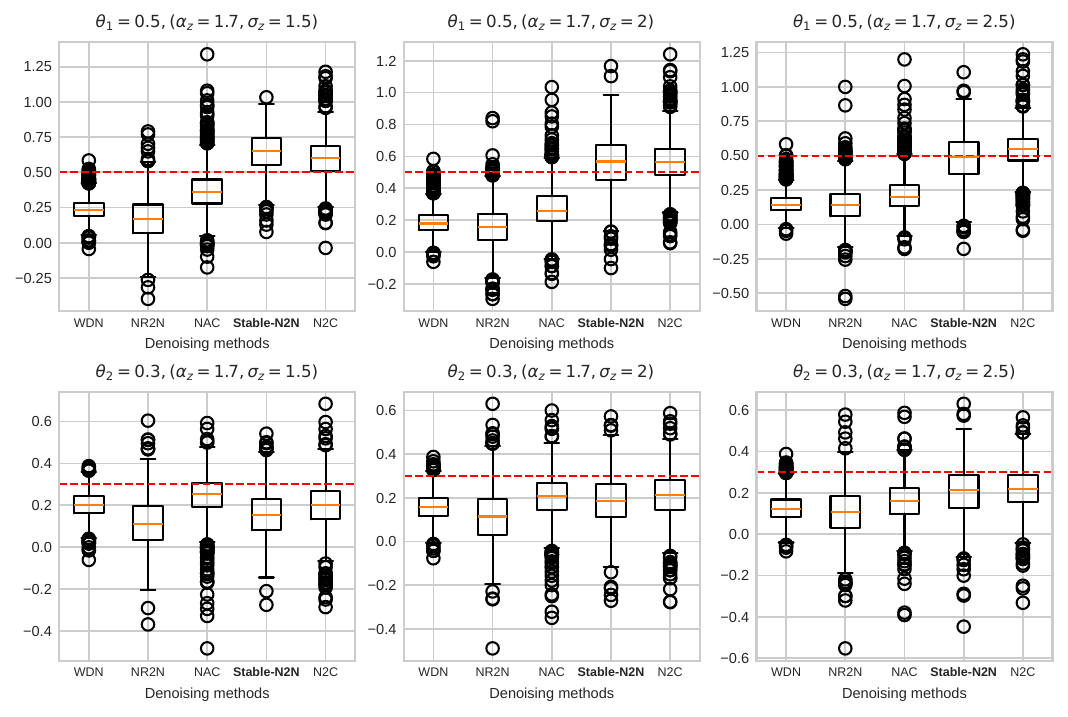}
    \caption {Box plots of parameters estimated with FLOC-YW (\Cref{eq:low order modified yw for pure ar}) on $1000$ denoised trajectories of noise-corrupted $S\alpha S$-AR($2$) model with innovation sequence $\{\xi_t\}_{t=1}^{n} \sim S(\alpha_\xi = 1.5, \sigma_\xi = 0.5)$ and $S(\alpha_z = 1.7,\sigma_z)$ distributed  additive noise. }
    \label{fig:coeff_stable(1)_1.7}
  \end{center}
\end{figure}
\begin{table}[htpb]
\centering
\setlength{\tabcolsep}{12pt}
\begin{tabular}{l@{\hspace{2em}}ccccc}
\toprule
$\{Z_t\} \sim S(\alpha_z,\sigma_z)$ & WDN & NR2N & NAC & \textbf{Stable-N2N} & N2C \\
\midrule
$\alpha_z = 1.5,\sigma_z = 1.5$ & $0.2304$ & $0.2850$ & $0.1968$ & $\textbf{0.1419}$ & $0.1171$ \\
$\alpha_z = 1.5,\sigma_z = 2$ & $0.2705$ & $0.2962$ & $0.2416$ & $\textbf{0.1470}$ & $0.1093$ \\
$\alpha_z = 1.5,\sigma_z = 2.5$ & $0.2979$ & $0.3143$ & $0.2777$ & $\textbf{0.1732}$ & $0.1076$ \\
\midrule
$\alpha_z = 1.7,\sigma_z = 1.5$ & $0.1797$ & $0.2643$ & $0.1266$ & $\textbf{0.1626}$ & $0.1315$ \\
$\alpha_z = 1.7,\sigma_z = 2$ & $0.2250$ & $0.2693$ & $0.1726$ & $\textbf{0.1352}$ & $0.1161$ \\
$\alpha_z = 1.7,\sigma_z = 2.5$ & $0.2583$ & $0.2804$ & $0.2175$ & $\textbf{0.1307}$ & $0.1084$ \\
\bottomrule
\end{tabular}
\caption{Average MAE computed between true parameters and parameters estimated from FLOC-YW (\Cref{eq:low order modified yw for pure ar}) on $1000$ denoised trajectories of noise-corrupted $S\alpha S$-AR($2$) model with innovation sequence $\{\xi_t\}_{t=1}^{n} \sim S(\alpha_\xi = 1.5, \sigma_\xi = 0.5)$ and $S(\alpha_z,\sigma_z)$ distributed additive noise. }
\label{tab:floc yw on infinite variance data(1)}
\end{table}

 \begin{figure}[htpb]
  \begin{center}
   \includegraphics[]{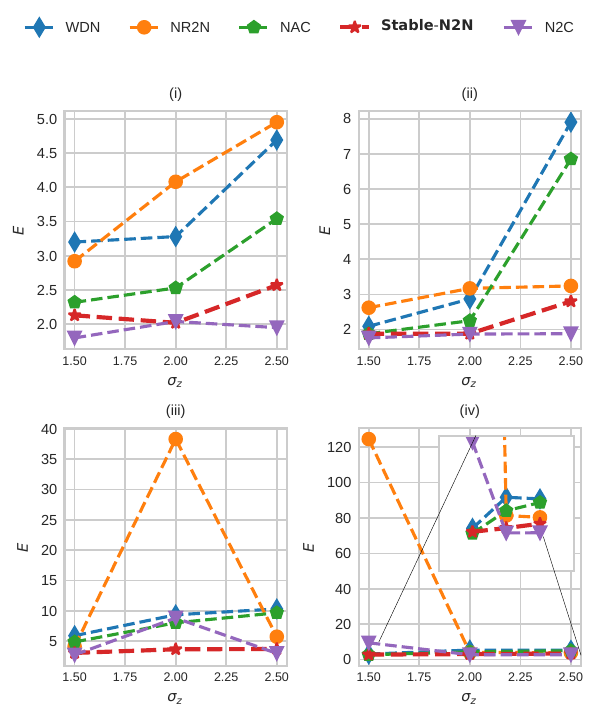}
    \caption {5-steps ahead forecast accuracy ($E$) computed from 1000 denoised trajectories of $S\alpha S$-AR($2$) model with additive $S\alpha S$ noise corresponding to (i) $ S(\alpha_\xi =1.9,\sigma_\xi =1)$, $S(\alpha_z = 1.5,\sigma_z)$; (ii) $ S(\alpha_\xi =1.9,\sigma_\xi =1)$, $S(\alpha_z = 1.7,\sigma_z)$; (iii) $ S(\alpha_\xi =1.5,\sigma_\xi = 0.5)$, $S(\alpha_z = 1.5,\sigma_z)$; (iv) $ S(\alpha_\xi =1.5,\sigma_\xi = 0.5)$, $S(\alpha_z = 1.7,\sigma_z)$. }
    \label{fig:stable_forecasting}
  \end{center}
  \end{figure}

\subsection{Results on semi-synthetic data} 
From \Cref{subsubsec:non-gaussian ar with non-gaussian noise} we observe that recovering the $S\alpha S$-AR model with extreme non-Gaussian nature from the noise-corrupted data is the most challenging among all the models considered. To further analyze this issue in a semi-real setting, we consider the USDPLN currency data from \cite{Sathe2023}. The dataset considered consists of $208$ observations taken from September 1, 2019 to July 1, 2020 and is shown in the upper left panel of \Cref{fig:real data}. We preprocess the dataset following the same way as in \cite{Sathe2023}. The lag-1 difference is taken to make the dataset stationary. The stationary dataset consists of $207$ observations and is shown in the upper right panel of \Cref{fig:real data}. Further, the transformed dataset is subdivided into a training set consisting of the first $167$ observations, and the remaining $40$ points are treated as the test dataset to compute the forecast accuracy. Due to the extreme fluctuation present in the data, the $S\alpha S$-AR(2) model is fitted to the training data in \cite{Sathe2023} and validated by analyzing its residuals. The residuals are found to follow $\alpha$-stable distribution with estimated parameters $(\hat \alpha,\hat \beta, \hat \sigma, \hat \mu) =(1.71,0.0008,0.003,0.009)$.\\
Now, we apply the modified YW method (\Cref{eq:low order modified yw for pure ar}) on the training data by taking $A = 1, \: B = 0.66$, the same values taken in \cite{Sathe2023} and estimate the model parameters. The estimated parameters are $(\hat \theta_1, \hat \theta_2) = (0.2177,0.1629)$. Moreover, we add simulated impulsive noise to the training dataset from the noise model $S(\alpha_z,\sigma_z)$ with $\alpha_z = 1.5,1.7$ and $\sigma_z$ ranging from $0.02$ to $0.06$ corresponding to each $\alpha_z$. We generate $1000$ noise-corrupted trajectories for each value of $(\alpha_z, \sigma_z)$ and apply our proposed methodology and the considered baseline methods on the generated semi-synthetic datasets to recover the USDPLN training data from its noise-corrupted versions. The trajectories of the noise-corrupted data are shown in the lower panel of \Cref{fig:real data}. The parameters of the simulated noise are chosen to create weakly to strongly impulsive noise in the data, keeping the autoregression significant in all the considered models.

\begin{figure}[htpb]
\begin{center}
\includegraphics[width = \textwidth]{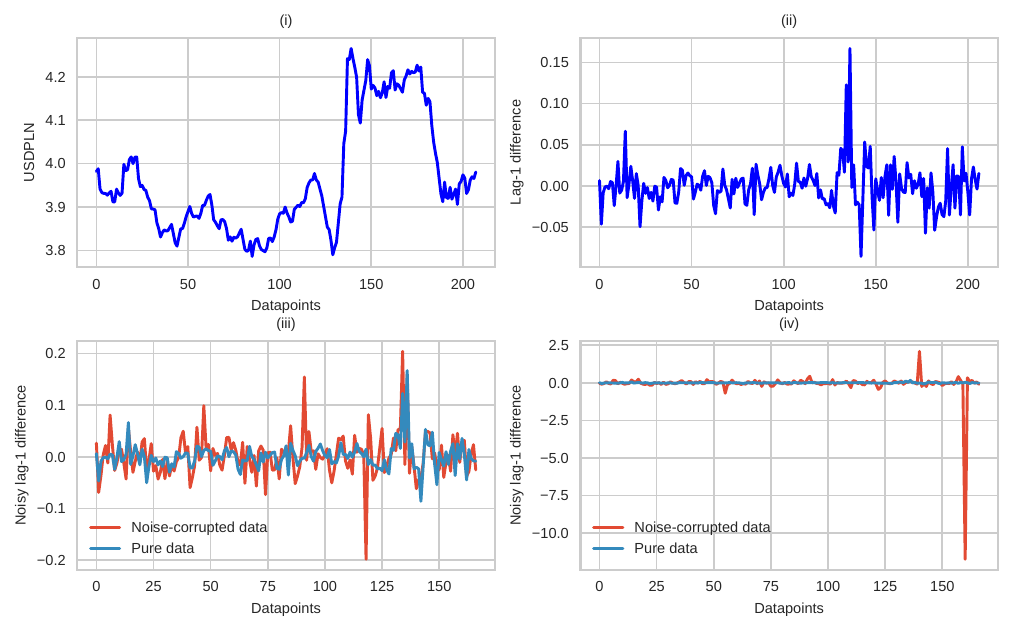}
\caption{(i) USDPLN currency data; (ii) Lag-1 difference; USDPLN stationary data with (iii) $S(\alpha_z = 1.7,\sigma_z = 0.02)$, (iv) $S(\alpha_z = 1.5,\sigma_z = 0.06)$ distributed additive noise.}
\label{fig:real data}
\end{center}
\end{figure}
Since NR2N and N2C are dataset-based methods and need additional data to train, we generate additional $167$ samples from each simulated noise-corrupted trajectory. Further, NAC and NR2N are implemented by creating noisier samples with additional simulated noise generated from
$S(\alpha_{\breve z},\sigma_{\breve z})$ distribution with $\alpha_{\breve z} \in [1.5,1.9]$ and $\sigma_{\breve z} \in [0.01,0.1]$ chosen uniformly for each simulated trajectory across all parameter values.\\
We apply the modified YW method (\Cref{eq:low order modified yw for pure ar}) on the denoised data to recover the model parameters by taking $A = 1, \: B = 0.66$. The model estimation results are presented in \Cref{tab:real coeff estimation} and displayed in \Cref{fig:coeff_real_1.5,fig:coeff_real_1.7}. We observe that our proposed methodology outperforms even the considered upper bound, N2C method, reaffirming our method's capability to recover highly impulsive data from its noise-corrupted versions.\\
The $5$-steps ahead forecast results are visualized in \Cref{fig:real_forecasting}.
\begin{figure}[htpb]
  \begin{center}
   \includegraphics[width = \textwidth]{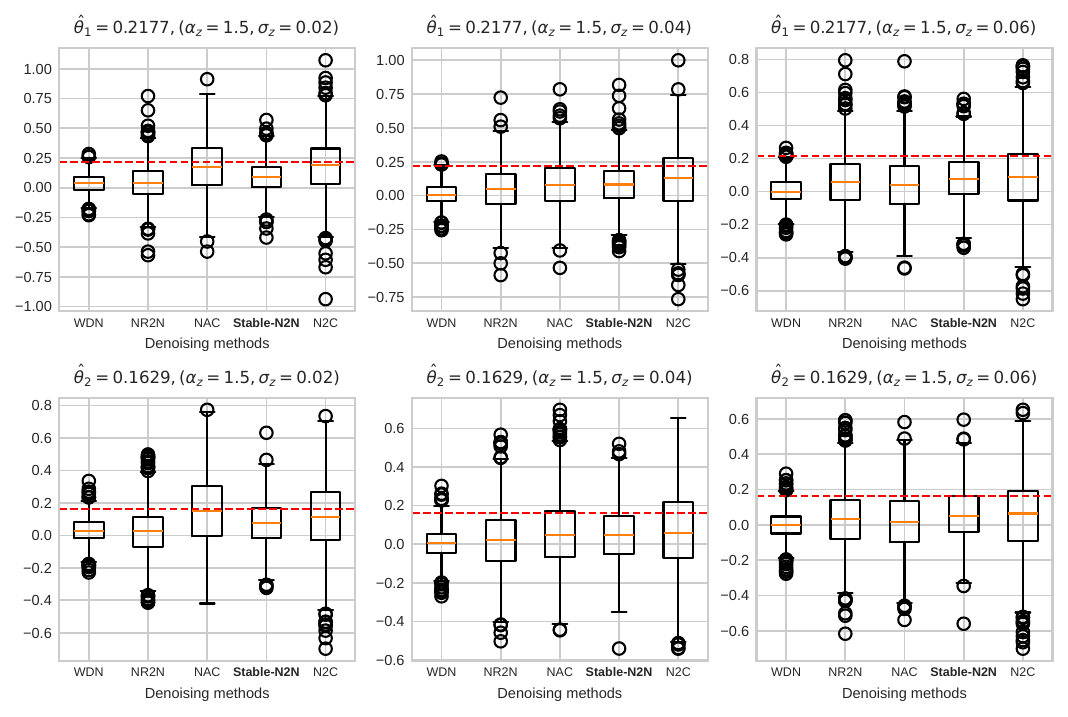}
    \caption {Box plots of parameters estimated with FLOC-YW (\Cref{eq:low order modified yw for pure ar}) on $1000$ denoised trajectories of noise-corrupted USDPLN data with $S(\alpha_z = 1.5,\sigma_z)$ distributed additive noise.}
    \label{fig:coeff_real_1.5}
  \end{center}
  \end{figure}
\begin{figure}[htpb]
  \begin{center}
   \includegraphics[width = \textwidth]{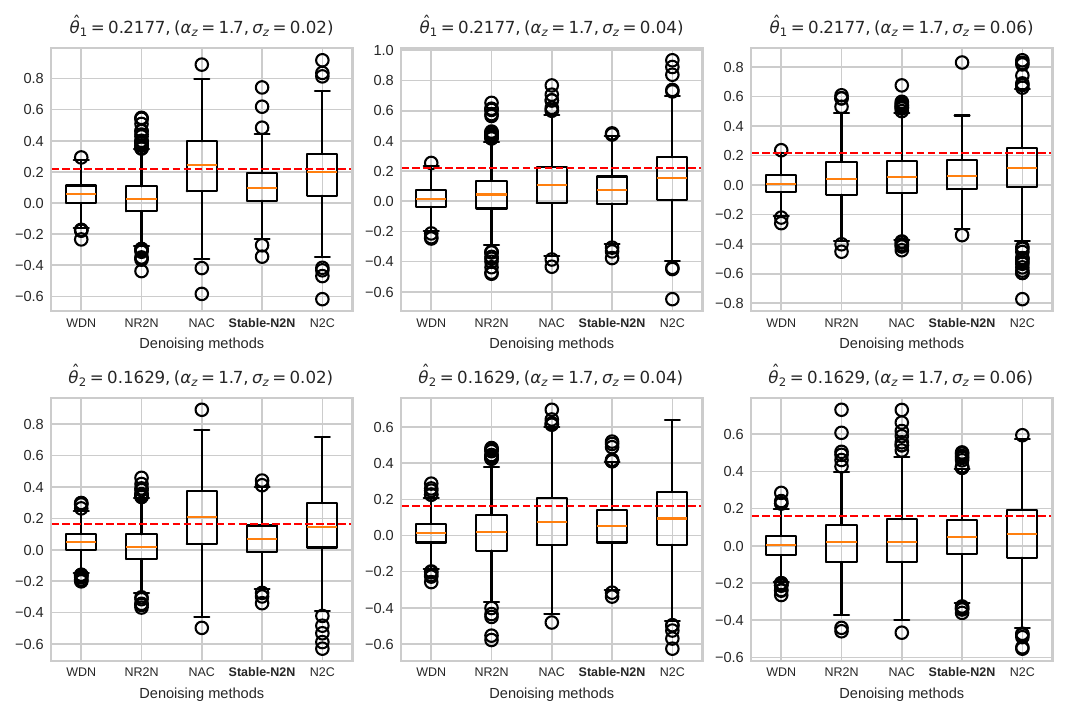}
    \caption {Box plots of parameters estimated with FLOC-YW (\Cref{eq:low order modified yw for pure ar}) on $1000$ denoised trajectories of noise-corrupted USDPLN data with $S(\alpha_z = 1.7,\sigma_z)$ distributed additive noise.}
    \label{fig:coeff_real_1.7}
  \end{center}
\end{figure}
\begin{table}[htpb]
\centering
\setlength{\tabcolsep}{12pt}
\begin{tabular}{l@{\hspace{2em}}ccccc}
\toprule
$\{Z_t\} \sim S(\alpha_z,\sigma_z)$ & WDN & NR2N & NAC & \textbf{Stable-N2N} & N2C \\
\midrule
$\alpha_z = 1.5,\sigma_z = 0.02$ & $0.1575$ & $0.1779$ & $0.1802$ & $\textbf{0.1392}$ & $0.1843$ \\
$\alpha_z = 1.5,\sigma_z = 0.04$ & $0.1826$ & $0.1848$ & $0.1819$ & $\textbf{0.1587}$ & $0.1945$ \\
$\alpha_z = 1.5,\sigma_z = 0.06$ & $0.1888$ & $0.1838$ & $0.1939$ & $\textbf{0.1567}$ & $0.1964$ \\
\midrule
$\alpha_z = 1.7,\sigma_z = 0.02$ & $0.1394$ & $0.1782$ & $0.1849$ & $\textbf{0.1335}$ & $0.1625$ \\
$\alpha_z = 1.7,\sigma_z = 0.04$ & $0.1753$ & $0.1831$ & $0.1714$ & $\textbf{0.1517}$ & $0.1764$ \\
$\alpha_z = 1.7,\sigma_z = 0.06$ & $0.1851$ & $0.1847$ & $0.1864$ & $\textbf{0.1611}$ & $0.1790$ \\
\bottomrule
\end{tabular}
\caption{Average MAE computed between parameters estimated from $1000$ denoised trajectories of noise-corrupted USDPLN data with $S(\alpha_z,\sigma_z)$ distributed additive noise and parameters estimated from pure USDPLN data, by FLOC-YW (\Cref{eq:low order modified yw for pure ar}) method.}
\label{tab:real coeff estimation}
\end{table}
 \begin{figure}[htpb]
  \begin{center}
   \includegraphics[]{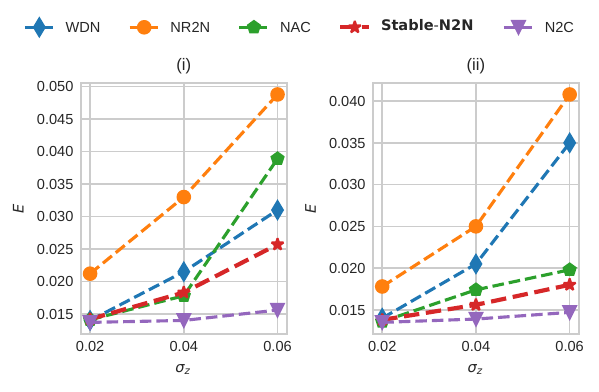}
    \caption {5-steps ahead forecast accuracy ($E$) computed from 1000 denoised trajectories of USDPLN data with (i) $S(\alpha_z = 1.5,\sigma_z)$, (ii) $S(\alpha_z = 1.7,\sigma_z)$ distributed additive noise. }
    \label{fig:real_forecasting}
  \end{center}
  \end{figure}
\section{Summary and discussion}\label{sec:conclusion}
In this paper, we present a novel self-supervised learning method in the context of time series denoising, to recover pure $S\alpha S$-AR model, $\alpha >1$, given samples from its additive noise-corrupted versions. The efficacy of our proposed learning method is demonstrated by comparing with other self-supervised methods in the literature on both finite- and infinite-variance models. In our simulation study, we consider a few different cases, with noise levels ranging from weakly impulsive to strongly impulsive. Moreover, the pure time series data are taken to be Gaussian as well as non-Gaussian in nature. For the non-Gaussian AR data, the simulation study considers data mildly non-Gaussian as well as extremely non-Gaussian in nature. We further pursue the case where the pure AR data contain large deviations by adding simulated impulsive noise to the USDPLN currency data and generating noise-corrupted samples, which are semi-synthetic in nature. The highlights of our method are that we do not need complete knowledge of the additive noise model and in addition, we do not require any additional data for the model to train. Therefore, our method performs considerably better than the baseline methods when the noise is strongly impulsive or when the pure time series data itself have extreme non-Gaussian nature. Also, it can be considered universal in the sense that the proposed methodology can be applied to other distribution models as long as the moments considered in our approach, exist.\\
In the presence of strongly impulsive noise, estimation and identification problems can be difficult. One may need the complete characteristics of the noise distribution. Furthermore, forecasting of pure data is inherently related to these problems. Therefore, denoising is a crucial aspect in heavy-tailed time series modeling and can have significant real applications. In this context, we believe that the proposed methodology will serve an important purpose. \\
In this study, we do not analyze the performance of denoising methods based on different deep learning architectures. Rather, our focus is on the training paradigms. Nonetheless, as mentioned earlier, the superiority of N2C as a performance upper bound shows that the considered FNN model is able to capture the dependencies in the time series data. In this regard, for a time series of very short length, the considered FNN architecture may need to be changed accordingly, to avoid over-parametrization.
Also, the proposed method, in conjunction with the FNN architecture, may face limitations when the underlying data are non-stationary. However, the assumption of stationarity remains standard in time-series modeling problems due to its significance in estimation and forecasting.

\section*{Acknowledgments}
The work of S.B. was supported by the Ministry of Human Resource Development (MHRD), Government of India, through institutional funding provided to the Indian Institute of Technology Madras.
The work of A.W. was supported by NCN OPUS project No. 2024/53/B/HS4/00433.




\appendix
\section{Estimation of noise-corrupted AR model with finite variance}\label{app:finite variance estimation}
In this part, we describe the EIV methodology proposed in \cite{Diversi2007} for estimating the parameters of a finite-variance AR model with i.i.d. additive noise of finite variance. The method combines low and high-order YW equations based on the classical autocovariance function $\gamma^y(k) = \mathbb{E}Y_tY_{t-k}$ defined for a zero-mean process.\\
First, we recall that for the pure autoregressive time series $\{X_t\}_{t \in \mathbb{Z}}$ with finite variance, low-order YW equations are defined as 
\begin{equation}\label{eq: classical low-order yw for pure ar}
    \Gamma^x\Theta = \lambda^x ,
\end{equation}
where, assuming that the AR model is of order $p$, $\Gamma^x$ is a non-singular matrix $p \times p$ given by 
\begin{equation}
(\Gamma^x)_{i,j} = \gamma^x(j-i) ,
\end{equation}
for $\gamma^x(k) = \mathbb{E}X_tX_{t-k}$, and 
\begin{equation}
\lambda^x = [\gamma^x(1),\ldots,\gamma^x(p)]', \quad \Theta = [\theta_1,\ldots,\theta_p]'.
\end{equation}
\Cref{eq: classical low-order yw for pure ar} is known as the classical YW method.\\
When $\{X_t\}_{t \in \mathbb{Z}}$ is corrupted by zero-mean additive noise of variance $\nu_z$, low-order YW equations are given by
\begin{equation}
(\Gamma^y-\nu_z I_p)\Theta = \lambda^y,
\end{equation}
where
\begin{equation}
(\Gamma^y)_{i,j} = \gamma^y(j-i),\quad \lambda^y = [\gamma^y(1),\ldots,\gamma^y(p)]',
\end{equation}
with $\Gamma^y$ being a $p \times p$ matrix.\\
High-order YW equations for the process $\{Y_t\}_{t \in \mathbb{Z}}$ are given by
\begin{equation}
\Gamma^y_r\Theta = \lambda_r^y,
\end{equation}
where $\Gamma^y_r$ is a matrix of order $r \times p$ given by
\begin{equation}
(\Gamma^y_r)_{i,j} = \gamma^y(p+i-j) \quad \mbox{and} \quad \lambda^y_r = [\gamma^y(p+1),\ldots,\gamma^y(p+r)]'.
\end{equation}
During estimation of the model, we consider the empirical versions of the matrices in high- and low-order YW equations. For samples $\{Y_1,\ldots,Y_n\}$ from noise-corrupted time series $\{Y_t\}_{t \in \mathbb{Z}}$, empirical version of $\gamma^y(k)$ is given by
\begin{equation}
\hat{\gamma}^y(k) = \frac{1}{l_2 - l_1}\sum_{t=l_1}^{l_2} Y_tY_{t-k},
\end{equation}
where
\begin{equation}
l_1 = max(1,1+k), \quad l_2 = min(n,n+k).
\end{equation}
The main idea of this method is to find an estimate of the variance of the additive noise, which is computed from the high-order YW equations and then plugged into the low-order YW equations for the estimation of model parameters. First, we define the following function of $\nu_z^*$ in terms of the empirical counterpart of low-order YW equations
\begin{equation}\label{eq:low order yw gaussian}
\hat{\Theta}^*(\nu_z) = (\hat{\Gamma}^y-\nu_z^{*}I_p)^{-1}\hat{\lambda}^y ,
\end{equation}
where 
\begin{equation}
(\hat{\Gamma}^y)_{i,j} = \hat{\gamma}^y(j-i),\quad \hat{\lambda}^y = [\hat{\gamma}^y(1),\ldots,\hat{\gamma}^y(p)]'.
\end{equation}
The estimate of additive noise variance $\hat{\nu}_z$ is found from $r$ no of high-order Yule-Walker equations. In particular, it is the value which minimizes the following cost function
\begin{equation}\label{eq:loss function gaussian}
J(\nu_z^{*}) = \|\hat{\Gamma}_r^y \hat{\Theta}^*(\nu_z^{*}) - \hat{\lambda}_r^y\|_2^2,
\end{equation}
for $\nu_z^{*} \in [0,\mbox{min eig}(\hat{G}^y)] $, where the empirical versions of matrices in high-order YW equations are given by
\begin{equation}
(\hat{\Gamma}^y_r)_{i,j} =\hat{\gamma}^y(p+i-j) \quad \mbox{and} \quad \hat{\lambda}^y_r = [\hat{\gamma}^y(p+1),\ldots,\hat{\gamma}^y(p+r)]',
\end{equation}
and the empirical version of the matrix $G^y$ is given by
\begin{equation}
\hat{G}^y = \begin{bmatrix}
\hat{\gamma}^y(0) & \hat{\lambda}^{y'}\\
\hat{\lambda}^y   & \hat{\Gamma}^y
\end{bmatrix}.
\end{equation}
After computing $\hat{\nu}_z$ from \Cref{eq:loss function gaussian}, we finally get the estimated parameters from the low-order YW equations \Cref{eq:low order yw gaussian}
\begin{equation}\label{eq:eiv gaussian}
\hat{\Theta} = (\hat{\Gamma}^y-\hat{\nu}_z I_p)^{-1}\hat{\lambda}^y.
\end{equation}
The whole procedure is summarized in \Cref{alg:eiv gaussian}.
\begin{algorithm}
\caption{Errors-in-variables method}\label{alg:eiv gaussian}
\begin{algorithmic}[1]

    \item Set value of $r(\geq p)$.
    \item Construct $\hat{\Gamma}_r^y$,$\hat{\lambda}_r^y,\hat{\Gamma}^y,\hat{\lambda}^y $.
    \item Construct $\hat{G}^y$ and compute $\mbox{min eig} (\hat{G}^y)$.
    \item Determine $\hat{\nu}_z$ which minimizes the cost function $J(\nu_z^{*})$ over the interval $[0,\mbox{min eig}(\hat{G}^y)] $.
    \item Compute $\hat{\Theta} = (\hat{\Gamma}^y-\hat{\nu}_z I_p)^{-1}\hat{\lambda}^y$.

\end{algorithmic}
\end{algorithm}
\section{Estimation of AR model with additive noise of infinite variance}\label{app:estimation infinite variance}
In this section, we recall the estimation of the noise-corrupted $S\alpha S$-AR model of order $p$ from \cite{Zulawinski2024}. Our model can be considered as a particular case of the noise-corrupted periodic AR model presented in the mentioned bibliographic position. The presented methodology is an extension of the EIV method for the noise-corrupted AR model with finite variance. Due to the presence of heavy-tailed behavior in the model, FLOC is used as an interdependency measure on the noise-corrupted time series $\{Y_t\}_{t \in \mathbb{Z}}$ and pure $S\alpha S$ -AR time series $\{X_t\}_{t \in \mathbb{Z}}$ instead of the classical autocovariance function to construct the relevant matrices in the low- and high-order YW equations.\\
FLOC estimator for the samples $\{X_1,\ldots,X_n\}$ taken from AR time series $\{X_t\}_{t \in \mathbb{Z}}$ with the innovation sequence drawn from the distribution $S(\alpha_\xi,\sigma_\xi)$ is given by
\begin{equation}
\hat{\tilde\gamma}^x(k,A,B) = \frac{1}{l_2-l_1}\sum_{t=l_1}^{l_2} X_t^{\langle A \rangle}X_{t-k}^{\langle B \rangle},
\end{equation}
where $A,B \geq 0$ such that
\begin{equation}
l_1 = max(1,1+k), \quad l_2 = min(n,n+k), \quad A+B < \alpha_\xi.
\end{equation}
Empirical versions of low-order YW equations to estimate $\{X_t\}_{t \in \mathbb{Z}}$ are given by (see \cite{Zulawinski2022})
\begin{equation}\label{eq:low order modified yw for pure ar}
  \hat{\tilde\Gamma}^x\hat\Theta = \hat{\tilde\lambda}^x,
\end{equation}
where, assuming the order of the model $p$, $\hat{\tilde\Gamma}^x$ is a non-singular $p \times p$ matrix given by 
\begin{equation}
(\hat{\tilde\Gamma}^x)_{i,j} = \hat{\tilde\gamma}^x(i-j,1,B),
\end{equation}
and,
\begin{equation}
\hat{\tilde\lambda}^x = [\hat{\tilde\gamma}^x(1,1,B),\ldots,\hat{\tilde\gamma}^x(p,1,B)]', \quad \Theta = [\theta_1,\ldots,\theta_p]'.
\end{equation}
We note that for $p=1$, the parameter estimate has a different form. For more information on this, we refer to \cite{Zulawinski2022}. The FLOC-based modified YW method to estimate an AR model (\Cref{eq:low order modified yw for pure ar}) is known as the FLOC-YW method.\\
When samples $\{X_1,\ldots,X_n\}$ from pure AR time series are corrupted by
i.i.d. additive noise samples $\{Z_1,\ldots,Z_n\}$ drawn from the distribution $S(\alpha_z,\sigma_z)$, we define the following function analogous to \Cref{eq:low order yw gaussian} in terms of low-order YW equations
\begin{equation}\label{eq:low order yw stable}
\hat{\Theta}^*(\hat\Lambda) = (\hat{\tilde\Gamma}^y-\hat\Lambda I_p)^{-1}\hat{\tilde\lambda}^y ,
\end{equation}
where  
\begin{equation}
(\hat{\tilde\Gamma}^y)_{i,j} = \hat{\tilde\gamma}^y(i-j,\bar B),\quad i,j = 1,\ldots,p, \quad \hat{\tilde\lambda}^y = [\hat{\tilde\gamma}^y(1,\bar B),\ldots,\hat{\tilde\gamma}^y(p,\bar B)]',
\end{equation}
for 
\begin{equation}
\hat{\tilde\gamma}^y(k,\bar B) = \frac{1}{l_2-l_1}\sum_{t=l_1}^{l_2} Y_tY_{t-k}^{\langle \bar B \rangle}, \quad l_1 = max(1,1+k),\quad l_2 = min(n,n+k),
\end{equation}
\begin{equation}\label{eq:bias in modified yw}
\hat\Lambda = \frac{1}{n}\sum_{t=1}^{n} Z_tY_t^{\langle \bar B \rangle} ,
\end{equation}
defined for samples $\{Y_1,\ldots,Y_n\}$ taken from noise-corrupted time series $\{Y_t\}_{t \in \mathbb{Z}}$. We also note that $\bar B$ is a fixed parameter such that $0<\bar B<min\{\alpha_\xi,\alpha_z\}-1$.
Similarly to the finite-variance case, we solve the following minimization problem constructed from high-order Yule-Walker equations to compute $\hat\Lambda^*$
\begin{equation}\label{eq:loss function stable}
\hat\Lambda^* = \underset{\hat\Lambda \geq 0}{\mbox{argmin}}\, \|\hat{\tilde\Gamma}_r^y \hat{\Theta}^*(\hat \Lambda) - \hat{\tilde\lambda}_r^y\|_2^2,
\end{equation}
where 
\begin{equation}
(\hat{\tilde\Gamma}^y_r)_{i,j} =\hat{\tilde\gamma}^y(p+i-j,\bar B), \quad i = 1,\ldots,r,\quad j = 1,\ldots,p,
\end{equation}
\begin{equation}
\hat{\tilde\lambda}^y_r = [\hat{\tilde\gamma}^y(p+1,\bar B),\ldots,\hat{\tilde\gamma}^y(p+r,\bar B)]'.
\end{equation}
Therefore, estimated parameters $\hat\Theta$ are obtained by plugging $\hat\Lambda^*$ into \Cref{eq:low order yw stable}.
\begin{equation}\label{eq:eiv for stable}
\hat{\Theta}= (\hat{\tilde\Gamma}^y-\hat\Lambda^* I_p)^{-1}\hat{\tilde\lambda}^y. 
\end{equation}
We note that the term $\hat \Lambda$ induces bias in the low-order YW equations. If we look closely at \Cref{eq:bias in modified yw}, $\hat \Lambda$ is computed as the empirical FLOC between $\{Y_t\}_{t=1}^{n}$ and $\{Z_t\}_{t=1}^{n}$. In \Cref{tab:bias in yw equations}, we show that the values of $\hat \Lambda$ increase with stronger noise (lower $\alpha_z$ in combination with higher $\sigma_z$), leading to more bias in the YW equations. The empirical values are computed from the noise-corrupted trajectories of $S\alpha S$-AR($2$) model with innovation sequence $\{\xi_t\}_{t=1}^{n} \sim S(\alpha_\xi = 1.9, \sigma_\xi = 1)$, simulated in the same way as in \Cref{subsec:synthetic datasets}. For brevity, we omit the results for other innovation sequences considered in \Cref{subsec:synthetic datasets}, as the corresponding outcomes are similar.

\begin{table}[htpb]
\centering
\setlength{\tabcolsep}{12pt}
\begin{tabular}{l@{\hspace{1.5em}}ccccc}
\toprule
$\{Z_t\} \sim S(\alpha_z, \sigma_z)$ & $\hat \Lambda$ \\
\midrule
$(\alpha_z, \sigma_z)$ = $(1.5,1.5)$ & $10.0044$ \\
$(\alpha_z, \sigma_z)$ = $(1.5,2)$ & $15.1811$ \\
$(\alpha_z, \sigma_z)$ = $(1.5,2.5)$ & $20.9793$ \\
\midrule
$(\alpha_z, \sigma_z)$ = $(1.7,1.5)$ & $4.9580$ \\
$(\alpha_z, \sigma_z)$ = $(1.7,2)$ & $7.5225$ \\
$(\alpha_z, \sigma_z)$ = $(1.7,2.5)$ & $10.3949$ \\
\bottomrule
\end{tabular}
\caption{Average $\hat \Lambda$ computed from $1000$ trajectories of noise-corrupted $S\alpha S$-AR($2$) model with innovation sequence $\{\xi_t\}_{t=1}^{n} \sim S(\alpha_\xi = 1.9, \sigma_\xi = 1)$. }
\label{tab:bias in yw equations}
\end{table}

\section{Sensitivity of the Stable-N2N method to the hyperparameter $B'$}\label{app: sensitivity analysis}

As mentioned in \Cref{subsec:stable-n2n}, $B'$ is an important hyperparameter in implementing the Stable-N2N method, derived from the notion of generalized variance. We have considered $B'$ in the range of $\Big(0, \frac{\min(\alpha_\xi,\alpha_z)}{2}\Big)$ or, more specifically, $0.45$ in the  simulations presented in \Cref{sec:simulation}, considering the model assumptions $\alpha_\xi, \alpha_z > 1$ from \Cref{subsec:noisy ar model}. In this section, we take a value of $B'$ closer to $0$, that is, $0.1$ and analyze the effect of taking such a value on the performance of the proposed Stable-N2N method. We consider some of the noise models from \Cref{sec:simulation} pertaining to both weak and strong noise levels. The estimation results from trajectories denoised by the Stable-N2N method with values of $B'$ taken as $0.45$ and $0.1$ respectively, are presented in \Cref{tab:sensitivity analysis gaussian,tab:sensitivity analysis non-gaussian}. From the average MAE metrics, we conclude that the proposed method is not highly sensitive to the hyperparameter $B'$, however in most of the simulation scenarios, $B' = 0.45$ returns lower MAE therefore justifying our choice for the value of the considered hyperparameter empirically.

\begin{table}[htpb]
\centering
\setlength{\tabcolsep}{12pt}
\begin{tabular}{l@{\hspace{1.5em}}ccccc}
\toprule
$\{Z_t\} \sim S(\alpha_z, \sigma_z)$ & $B' = 0.45$ & $B' = 0.1$ \\
\midrule
$(\alpha_z, \sigma_z)$ = $(1.5,1)$ & $0.1209$ & $0.1287$ \\
$(\alpha_z, \sigma_z)$ = $(1.5,2)$ & $0.2413$ & $0.2421$ \\
\midrule
$(\alpha_z, \sigma_z)$ = $(1.7,1)$ & $0.1163$ & $0.1202$ \\
$(\alpha_z, \sigma_z)$ = $(1.7,2)$ & $0.1793$ & $0.1853$ \\
\bottomrule
\end{tabular}
\caption{Average MAE computed between true parameters and parameters estimated from classical YW (\Cref{eq: classical low-order yw for pure ar}) on $1000$ denoised trajectories of Gaussian AR($2$) model with $S(\alpha_z,\sigma_z)$ distributed additive noise by the proposed Stable-N2N method.}
\label{tab:sensitivity analysis gaussian}
\end{table}

\begin{table}[htpb]
\centering
\setlength{\tabcolsep}{12pt}
\begin{tabular}{l@{\hspace{1.5em}}ccccc}
\toprule
$\{\xi_t\} \sim S(\alpha_\xi, \sigma_\xi)$, $\{Z_t\} \sim S(\alpha_z, \sigma_z)$ & $B' = 0.45$ & $B' = 0.1$ \\
\midrule
$(\alpha_\xi, \sigma_\xi)$ = $(1.9,1)$, $(\alpha_z, \sigma_z)$ = $(1.5,1.5)$ & $0.1320$ & $0.1420$ \\
$(\alpha_\xi, \sigma_\xi)$ = $(1.9,1)$,
$(\alpha_z, \sigma_z)$ = $(1.5,2.5)$ & $0.1786$ & $0.1808$ \\
\midrule
$(\alpha_\xi, \sigma_\xi)$ = $(1.9,1)$, $(\alpha_z, \sigma_z)$ = $(1.7,1.5)$ & $0.1578$ & $0.1651$ \\
$(\alpha_\xi, \sigma_\xi)$ = $(1.9,1)$,
$(\alpha_z, \sigma_z)$ = $(1.7,2.5)$ & $0.1234$ & $0.1331$ \\
\midrule
$(\alpha_\xi, \sigma_\xi)$ = $(1.5,0.5)$, $(\alpha_z, \sigma_z)$ = $(1.5,1.5)$ & $0.1419$ & $0.1457$ \\
$(\alpha_\xi, \sigma_\xi)$ = $(1.5,0.5)$,
$(\alpha_z, \sigma_z)$ = $(1.5,2.5)$ & $0.1732$ & $0.1729$ \\
\midrule
$(\alpha_\xi, \sigma_\xi)$ = $(1.5,0.5)$, $(\alpha_z, \sigma_z)$ = $(1.7,1.5)$ & $0.1626$ & $0.1560$ \\
$(\alpha_\xi, \sigma_\xi)$ = $(1.5,0.5)$,
$(\alpha_z, \sigma_z)$ = $(1.7,2.5)$ & $0.1307$ & $0.1363$ \\
\bottomrule
\end{tabular}
\caption{Average MAE computed between true parameters and parameters estimated from FLOC-YW (\Cref{eq:low order modified yw for pure ar}) on $1000$ denoised trajectories of noise-corrupted $S\alpha S$-AR($2$) model with innovation sequence $\{\xi_t\}_{t=1}^n$ and $S(\alpha_z,\sigma_z)$ distributed additive noise by the proposed Stable-N2N method.}
\label{tab:sensitivity analysis non-gaussian}
\end{table}

\section{Application of the Stable-N2N method to different noise distributions}\label{app:universality}

In this article, the $\alpha$-stable distribution is taken as the primary noise distribution due to its significance as the limiting distribution in the generalized Central Limit Theorem. However, in the context of presenting the proposed method as universal, it is important to show that the methodology is robust to other noise distributions as well. To this end, we consider two different noise distributions, accommodating finite- as well as infinite-variance noise-corrupted models. First, we consider the Gaussian AR model, with trajectories generated under the same simulation settings as in \Cref{subsubsec: Gaussian AR model with noise}, corrupted with additive noise $\{Z_t\}$, taken as a sequence of additive outliers (AO). We have already noted the strong denoising of the Gaussian AR model by the proposed method, when the corruption is taken from the distribution $N(0, \nu_z)$ with $\nu_z$ ranging from $5$ to $15$. We consider the sequence of additive outliers $\{Z_t\}$ of the same variance level as the variance of the Gaussian additive noise taken from the distribution $N(0,\nu_z = 15)$ to judge the performance of the proposed method in removing additive outliers at a significant noise level. Specifically, for each $t$, we assume that $\mathbb{P}(Z_t = 20) = \mathbb{P}(Z_t = -20) = 0.01875$ and $\mathbb{P}(Z_t = 0) = 0.9625$. Further, we consider the case where $\{Z_t\}$, noise in the Gaussian AR model follows the Student's t distribution with degrees of freedom $d_f$. The variance of the Student's t distribution is infinite when $1< d_f \leq 2$. In our simulations, we take $d_f = 1.8$ to generate noise-corrupted trajectories with highly impulsive behavior. For more information on Student's t distribution, we refer to the bibliographic position \cite{Casella2002}.

The sample trajectories are shown in \Cref{fig: gaussian with ao/t noise}. Although for $\{Z_t\}$ being a sequence of additive outliers, the noise-corrupted model is finite-variance, drawing additional noise samples $\{\breve Z_t\}$ (statistically close to $\{Z_t\}$) from the distribution $N(0,\hat \nu_z)$, where $\hat \nu_z$ is the estimated variance of additive noise, does not make sense when creating noisier data points for the implementation of NAC and NR2N. In fact, as noted in \cite{Zulawinski2025}, an infinite-variance process can be used naturally to model data that exhibit outliers. Therefore, $\{\breve Z_t\}$ is drawn from the distribution $S(\alpha_{\breve z},\sigma_{\breve z})$ blindly, with $\alpha_{\breve z} \in [1.5,1.6]$ and $\sigma_{\breve z} \in [0.5,0.75]$ chosen uniformly for each simulated trajectory. We use relatively tighter bounds on the parameter values to obtain samples statistically close to $\{Z_t\}$ because characteristics of the underlying innovation sequence can be obtained due to the finite variance of the noise-corrupted model (see \cite{Diversi2007}). For $\{Z_t\}$ being Student's t distributed, blind denoising for NAC and NR2N is done by drawing samples from Student's t distribution with degrees of freedom $d_f \in [1.7,1.9]$, chosen uniformly for each simulated trajectory. We also note that due to the impulsive behavior of the noise with t distribution, blind denoising can also be done by drawing from $S(\alpha_{\breve z},\sigma_{\breve z})$ distribution, $\alpha_{\breve z} \in [1.5,1.9]$ and $\sigma_{\breve z} \in [1,2.5]$ chosen uniformly for each simulated trajectory, with almost similar results.
\begin{figure}[htpb]
  \begin{center}
   \includegraphics[width = \textwidth]{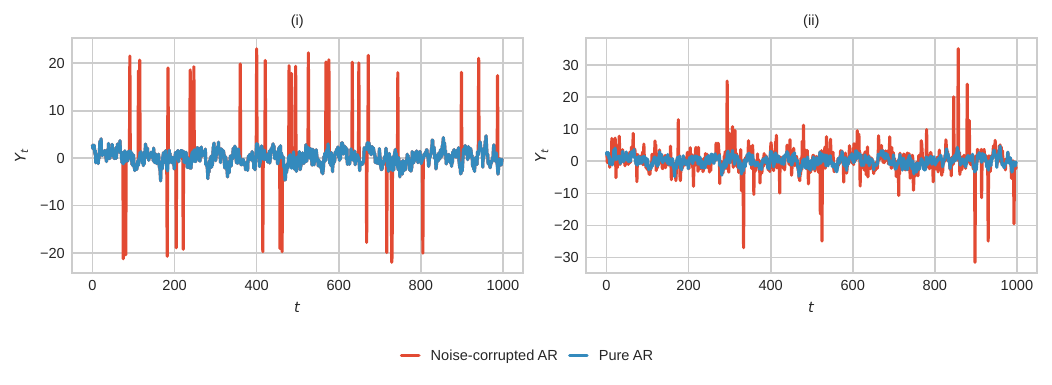}
    \caption{Sample trajectories of Gaussian AR($2$) model with additive noise (i) as a sequence of additive outliers, (ii) as Student's t distributed with $d_f = 1.8$.}
    \label{fig: gaussian with ao/t noise}
  \end{center}
\end{figure}

The estimation results from the denoised Gaussian AR trajectories are shown in \Cref{fig:coeff_ao,fig:coeff_t} and \Cref{tab:classical yw on ao/t data}. The results indicate the robustness of the proposed denoising method with respect to different noise distributions. Furthermore, fewer outliers in the parameters estimated from AR trajectories denoised by the Stable-N2N method in \Cref{fig:coeff_ao} and average MAE reported in \Cref{tab:classical yw on ao/t data} show that the proposed method removed additive outliers more effectively than Gaussian noise of the same intensity. This conclusion is significant in the context of real-world scenarios, since the presence of additive outliers is noted in bibliographic positions \cite{Sarnaglia2010, Solci2019} when collecting data corresponding to the concentration of particulate matter.

\begin{figure}[htpb]
  \begin{center}
   \includegraphics[width = \textwidth]{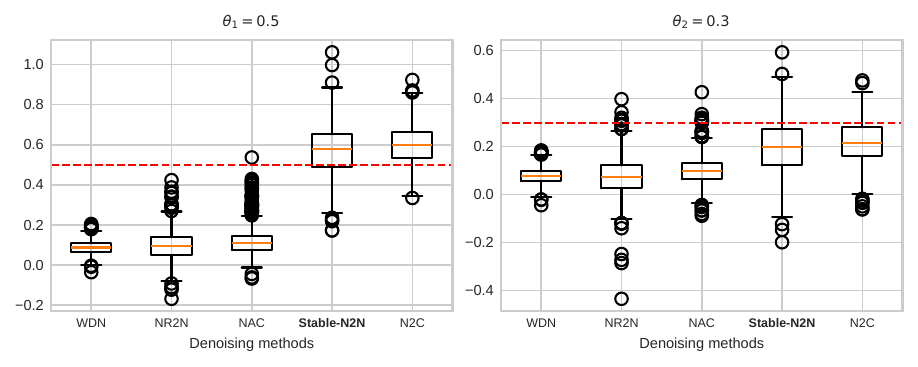}
    \caption {Box plots of parameters estimated with classical YW (\Cref{eq: classical low-order yw for pure ar}) on $1000$ denoised trajectories of Gaussian AR($2$) model with noise as a sequence of additive outliers.}
    \label{fig:coeff_ao}
  \end{center}
  \end{figure}
\begin{figure}[htpb]
  \begin{center}
   \includegraphics[width = \textwidth]{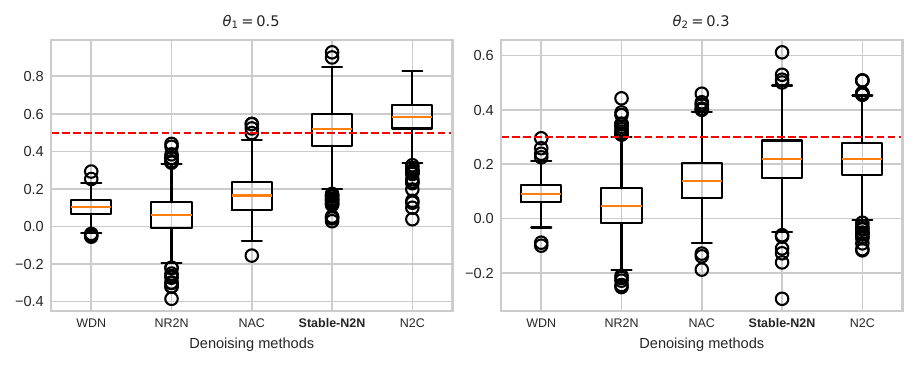}
    \caption {Box plots of parameters estimated with classical YW (\Cref{eq: classical low-order yw for pure ar}) on $1000$ denoised trajectories of Gaussian AR($2$) model with Student's t distributed additive noise.}
    \label{fig:coeff_t}
  \end{center}
  \end{figure}  
\begin{table}[htpb]
\centering
\setlength{\tabcolsep}{12pt}
\begin{tabular}{l@{\hspace{1.5em}}ccccc}
\toprule
$\{Z_t\}$ & WDN & NR2N & NAC & \textbf{Stable-N2N} & N2C \\
\midrule
$\{Z_t\} \sim \text{AO}$ & $0.3184$ & $0.3157$ & $0.2915$ & $\textbf{0.1194}$ & $0.1045$ \\
$\{Z_t\} \sim t(d_f = 1.8)$ & $0.3028$ & $0.3442$ & $0.2468$ & $\textbf{0.1070}$ & $0.1013$ \\
\bottomrule
\end{tabular}
\caption{Average MAE computed between true parameters and parameters estimated from classical YW (\Cref{eq: classical low-order yw for pure ar}) on $1000$ denoised trajectories of Gaussian AR($2$) model with additive noise $\{Z_t\}$.}
\label{tab:classical yw on ao/t data}
\end{table}

\section{Analysis based on the signal-to-noise ratio (SNR) metric}\label{app:snr}

We have relied on the estimation of the denoised data by low-order YW method as a performance metric for the proposed method since increasing noise strength induces larger bias in the YW equations and the estimation methods have an intrinsic relation with the temporal structure of the underlying pure AR data. In this section, we analyze the performance of the proposed method compared to the baseline methods based on an alternative metric, which is the signal-to-noise ratio (SNR). Since the primary goal of this article is to introduce a novel method for removing highly impulsive noise, we assume that the noise $\{Z_t\}$ is drawn from the distribution $S(\alpha_z = 1.5, \sigma_z = 2)$ for the Gaussian ($\{\xi_t\} \sim   N(0, \nu_\xi = 1 $)) and non-Gaussian  ($\{\xi_t\} \sim S(\alpha_\xi = 1.9, \sigma_\xi = 1) $ and $\{\xi_t\} \sim S(\alpha_\xi = 1.5, \sigma_\xi = 0.5) $) AR models with trajectories generated in the same way as in \Cref{sec:simulation}. Due to the heavy-tailed noise with infinite variance, we use the geometric power based SNR metric (G-SNR), introduced in \cite{gonzalez2006}, to measure the relative noise strength. \\
Following the notation as in the rest of the article, for pure AR time series data $\{X_t\}_{t=1}^n$ and data points $\{\tilde X_t\}_{t=1}^n$, denoised from noise-corrupted samples $\{Y_t\}_{t=1}^n$, the relative noise strength is computed in terms of empirical G-SNR by the following formula,
\begin{equation}\label{eq:g-snr}
    \text{G-SNR} = \frac{1}{2C_g}\left(\frac{\exp(\frac{1}{n}\sum_{t=1}^{n} \ln |X_t|)}{\exp(\frac{1}{n}\sum_{t=1}^{n} \ln |\tilde X_t - X_t|))}\right)^2,
\end{equation}
 where $2C_g$ is the normalization constant with $C_g \approx 1.78$ being the exponential of Euler constant. For more information, we refer to the bibliographic position \cite{gonzalez2006}.\\
Empirical G-SNR values from the denoised trajectories of noise-corrupted AR($2$)
 models are presented in \Cref{fig:g-snr} and \Cref{tab:g-snr}. Larger SNR values indicate more efficient denoising. Therefore, the SNR-based metric also supports the conclusion, drawn from the estimation of the denoised data, that the proposed Stable-N2N method is superior to the considered baseline methods in removing very impulsive noise.
 
 \begin{figure}[htpb]
  \begin{center}
   \includegraphics[width = \textwidth]{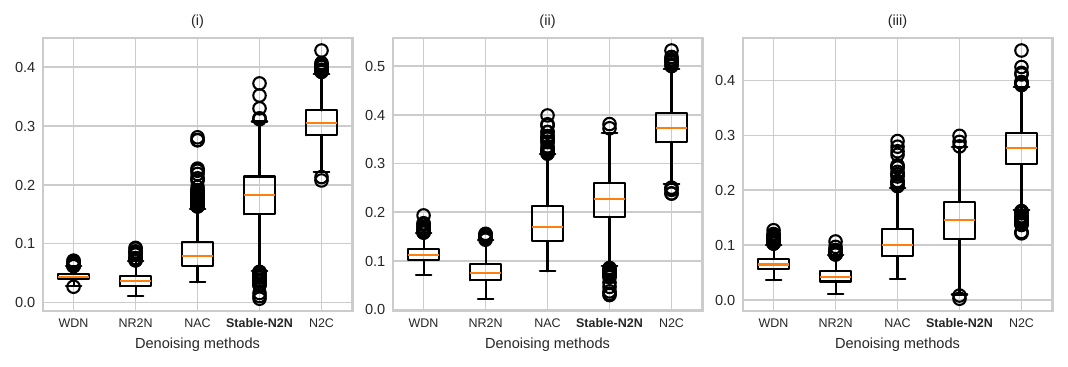}
    \caption {Box plots of G-SNR (\Cref{eq:g-snr}) values computed from 1000 denoised trajectories of noise-corrupted AR($2$) models with innovation sequence (i) $\{\xi_t\} \sim   N(0, \nu_\xi = 1 $), (ii) $\{\xi_t\} \sim S(\alpha_\xi = 1.9, \sigma_\xi = 1) $, (iii) $\{\xi_t\} \sim S(\alpha_\xi = 1.5, \sigma_\xi = 0.5) $.}
    \label{fig:g-snr}
  \end{center}
  \end{figure}

\begin{table}[htpb]
\centering
\setlength{\tabcolsep}{12pt}
\begin{tabular}{l@{\hspace{1.5em}}ccccc}
\toprule
$\{\xi_t\}$ & WDN & NR2N & NAC & \textbf{Stable-N2N} & N2C \\
\midrule
$\{\xi_t\} \sim   N(0, \nu_\xi = 1 $) & $0.0446$ & $0.0372$ & $0.0869$ & $\textbf{0.1799}$ & $0.3065$ \\
$\{\xi_t\} \sim S(\alpha_\xi = 1.9, \sigma_\xi = 1) $ & $0.1137$ & $0.0778$ & $0.1803$ & $\textbf{0.2240}$ & $0.3742$ \\
$\{\xi_t\} \sim S(\alpha_\xi = 1.5, \sigma_\xi = 0.5) $ & $0.0663$ & $0.0440$ & $0.1071$ & $\textbf{0.1440}$ & $0.2746$ \\
\bottomrule
\end{tabular}
\caption{Average G-SNR (\Cref{eq:g-snr}) computed from 1000 denoised trajectories of noise-corrupted AR($2$) models.}
\label{tab:g-snr}
\end{table}

 \newpage
\bibliographystyle{elsarticle-num}
\bibliography{references}
  
\end{document}